\documentclass[12pt]{iopart}
\usepackage[T1]{fontenc}
\usepackage{dsfont}
\usepackage{pst-all}
\usepackage{auto-pst-pdf}
\usepackage{comment}
\usepackage{subeqnarray}
\usepackage{amssymb}
\usepackage{epstopdf}
\usepackage{amsthm}
\usepackage{tensor}

\newcommand*{\I}{\mathrm{i}}
\newcommand*{\E}{\mathrm{e}}
\newcommand*{\D}{\mathrm{d}}

\begin{document} 
\title{\boldmath Sums of Random Matrices and the Potts Model on Random Planar Maps}
\author{Max R. Atkin$^{1}$, Benjamin Niedner$^2$ and John F. Wheater$^2$}
\address{$^1$ Universit\'e catholique de Louvain, Chemin du cyclotron 2, B-1348 Louvain-La-Neuve, Belgium}
\address{$^2$ Rudolf Peierls Centre for Theoretical Physics, 1 Keble Road, University of Oxford, Oxford OX1 3NP, UK}
\ead{ \mailto{max.atkin@uclouvain.be}, \mailto{benjamin.niedner@physics.ox.ac.uk}, \mailto{j.wheater@physics.ox.ac.uk}}

\begin{abstract}
We compute the partition function of the $q$-states Potts model on a random planar lattice with $p\leq q$ allowed, equally weighted colours on a connected boundary. To this end, we employ its matrix model representation in the planar limit, generalising a result by Voiculescu for the addition of random matrices to a situation beyond free probability theory. We show that the partition functions with $p$ and $q-p$ colours on the boundary are related algebraically. Finally, we investigate the phase diagram of the model when $0\leq q \leq 4$ and comment on the conformal field theory description of the critical points.
\end{abstract}
\submitto{\jpa}
\maketitle
 
\newtheorem{thm}{Theorem}[section]
\theoremstyle{corollary}
\newtheorem{cor}[thm]{Corollary}
\theoremstyle{lemma}
\newtheorem{lem}[thm]{Lemma}
\theoremstyle{remark}
\newtheorem{rem}[thm]{Remark}
\theoremstyle{example}
\newtheorem{ex}[thm]{Example}
\theoremstyle{definition}
\newtheorem{defn}[thm]{Definition}
\theoremstyle{proposition}
\newtheorem{prop}[thm]{Proposition}

\section{Overview}
\label{sec:overview}

A \em planar map \em can be defined as a graph embeddable in the plane, modulo homeomorphisms. The enumeration of planar maps, initiated by Tutte half a century ago \cite{Tutte63}, has been a topic of long-running interest with important applications in theoretical physics. When endowed with a statistical lattice model defined thereon, the detailed knowledge of the asymptotic properties of large maps affords a rigorous definition of the path integral for two-dimensional gravity coupled to matter or equivalently, bosonic string theory in a non-critical target space dimension \cite{David85,Ambjorn85a,Ambjorn85b,Kazakov85} -- see also \cite{DiFrancesco93,Ginsparg93} for a more general overview. The classification of boundary conditions that can consistently be imposed on the boundary of the graph provides detailed insight into the spectrum of the theory and as a result, the evaluation of the corresponding generating functions has been of continued interest \cite{Carroll96,Carroll97,IR10,BIR10,Chan10,Atkin11,Atkin12}.

Here we compute the partition function $W_{(p)}(z)$ of the $q$-states Potts model \cite{Potts52} on random planar maps with $p\leq q$ allowed, equally weighted colours on a connected boundary, where $z$ denotes the fugacity of a boundary link. For $q=2$ and $q=3$, all of these correspond to boundary conditions of the Ising and Potts lattice models that were found to be integrable on a fixed lattice by Behrend and Pearce \cite{Behrend00}. To this end, we employ the matrix model formulation as proposed long ago by Kazakov \cite{Kazakov88}, defined by a probability measure on $q$ hermitian matrices $\{X_i\}_{i=1}^q$ of size $N\times N$ of the form

\begin{equation}
\label{pottsmeasure}
\D\mu(X_1,X_2,\dots X_q)=\frac{1}{Z_{N,q}}\prod_{\langle i j\rangle}\E^{N\mathrm{tr}X_i X_j}\times \prod_{i=1}^q\E^{-N\mathrm{tr}V_i(X_i)}\D X_i\ ,
\end{equation}

\noindent where $\D X_i$ denotes the integration over independent components,

\begin{equation}
\mathrm{d}X_i=\prod_{1\leq a\leq b \leq N}\D \mathrm{Re}(X_i)\indices{^a_b}\prod_{1\leq a<b \leq N}\D\mathrm{Im}(X_i)\indices{^a_b}\ .
\end{equation}

\noindent Here, the partition function $Z_{N,q}$ normalises expectation values such that $\int\D\mu=1$ and $\langle ij \rangle$ denotes the product over distinct $i,j$. In the particular case of random \em triangulations\em, $W_{(1)}(z)$ was first found by Daul \cite{Daul94} and later Zinn-Justin \cite{ZJ99} in the saddle point approximation for integer $0\leq q\leq 4$, and by Bonnet and Eynard \cite{Bonnet99,Eynard99} using the method of loop equations; the latter also found an algebraic equation for $W_{(1)}(z)$ when $\arccos((q-2)/2)/\pi$ is rational. See also \cite{Guionnet10,Bernardi11} for related results. More recently, the authors of \cite{Borot12} considered a combinatorial approach using the so-called ``loop-gas'' representation of the Potts model on planar maps without reference to a matrix integral, from which a pair of coupled functional equations for $W_{(1)}(z)$ and a function related to $W_{(q)}(z)$ was obtained and solved. For $q=2$, the relationship between $W_{(1)}(z)$ and $W_{(2)}(z)$ has been expressed succinctly from the perspective of the boundary renormalisation group \cite{Carroll96,Carroll97}, a picture which later was extended to non-planar geometries \cite{Atkin11} and arbitrary face degrees \cite{Atkin12}. Indeed, these investigations revealed that different boundary conditions yield inequivalent algebraic equations satisfied by the corresponding generating functions. However, a systematic understanding of the relationship between different boundary conditions for more general values of $q$ and $p$ appears to be lacking and herein we report on some progress on this matter. 
 
 As will be discussed in Section \ref{sec:bc}, $W_{(p)}(z)$ is given by the Stieltjes transform of the spectral density of the sum of $p$ hermitian random matrices of infinite size. A more mathematically inclined characterisation of the problem solved in this paper thus goes as follows: given a set of $q$ hermitian random matrices distributed according to \eref{pottsmeasure} and a positive integer $p\leq q$, what is the spectral density of the sum $X_1+X_2+\dots X_p$ as $N\to\infty$? For the simpler case of uncorrelated matrices, the answer has been succinctly summarised in the context of free probability  \cite{Speicher93,Speicher94}, going back to Voiculescu's observation of the asymptotic freeness of Gaussian independent random matrices \cite{Voiculescu92}: given a spectral density $\rho_X(z)$, define the \em R-transform \em via the functional inverse of its Stieltjes transform $W_X(z)$,

\begin{equation}
\label{rtrafo} 
R^X(z)=W_X^{-1}(z)-\frac{1}{z}\ .
\end{equation}

\noindent Now assume $Y$ is freely independent from $X$. Then the \em free (additive) convolution \em $\rho_X\boxplus\rho_Y$ is defined by $\rho_{X+Y}$. The results of free probability theory \cite{Voiculescu92} state that the latter is obtained from $\rho_X$ and $\rho_Y$ by

\begin{enumerate}
\item computing $R^{X+Y}$ by adding the respective R-transforms,
\begin{equation}
R^{X+Y}(z)=R^X(z)+R^Y(z)\ ,
\end{equation}
\item inverting the relationship \Eref{rtrafo}, 
\begin{eqnarray}
\label{Voicu}
W_{X+Y}^{-1}(z)&=R^{X+Y}(z)+\frac{1}{z}\ .
\end{eqnarray}
\end{enumerate}

\noindent The spectral density for the sum $X+Y$ can then be read off from the imaginary part of the inverse function,
\begin{eqnarray}
\rho_{X+Y}(x)&=\frac{1}{\pi}\mathrm{Im}\ W_{X+Y}(x)_+\ .
\end{eqnarray}

\noindent We follow \cite{ZJ99-2} in referring to the key relationship given by \eref{Voicu} as Voiculescu's formula. Clearly the matrices $\{X_i\}_{i=1}^q$ distributed according to \eref{pottsmeasure} are \em not \em freely independent -- their correlations prevent us from applying Voiculescu's formula to compute the spectral densities for sums like $X_1+X_2+\dots X_p$. Our strategy to obtain the disk partition function of the Potts model involves a suitable generalisation of the R-transform and using it to evaluate the spectral density and hence $W_{(p)}(z)$. 

This article is organised as follows: Section \ref{sec:bc} reviews the matrix model formulation and defines the observables of interest. In Section \ref{sec:planar}, we state the main results in Propositions \ref{prop-curve} and \ref{prop-ycurve} and discuss how our results reduce to Voiculescu's formula when the interactions of the Potts model are turned off. In Section \ref{sec:cases}, we study hard dimers, the Ising model and the $3$-states Potts model on planar triangulations as simple examples in greater detail, deriving explicit expressions for the spectral curve for given $p$ and comparing our results to the literature where available. In Section \ref{sec:crit}, we proceed to investigate the phase diagram of the model when $0<q<4$ and comment on the conformal field theory description of the scaling behaviour associated with the critical points. Finally, we discuss the implications of our results in Section \ref{sec:pottsdiscussion}.

\section{Definition of the model}
\label{sec:bc}

\noindent Following \cite{Kazakov88,Daul94,ZJ99,Bonnet99,Eynard99}, we use \eref{pottsmeasure} to define observables of the $q$-states Potts model on a random planar lattice. A distinguishing feature of this measure are the exponentials of $\mathrm{tr}X_i X_j$ in \eref{pottsmeasure}, breaking the overall $U(N)\times O(q)$-invariance of the remaining factors. Here we confine our study to the case $V_i(z)=U(z)+z^2/2$ $\forall i$ for a fixed polynomial $U(z)=\sum_{m=2}^{k+2}t_mz^m/m$, rendering the $q$ states of the statistical system indistinguishable. In this case, the measure remains invariant under the overall symmetries

\begin{equation}
\label{sym}
	X_i\to U^{\dagger}X_iU\ ,\quad U\in U(N)\ ,\quad\mathrm{and}\quad X_i\to X_{\sigma(i)}\ ,\quad \sigma\in S_q\ ,
\end{equation}

\noindent where $S_q$ denotes the symmetric group of order $q!$. This is to be contrasted with the ``multi-matrix chain'' studied for example in \cite{ZJ98,Eynard03}, for which $\mathbb{Z}_2$ is preserved in place of $S_q$. Our definition includes a subset of the statistical RSOS models on a random lattice, which are indexed by simply laced Dynkin diagrams \cite{Pasquier86} and have been described using matrix integrals by Kostov \cite{Kostov95}. In particular, for $(q,k)=(2,1)$, \eref{pottsmeasure} describes the $A_3$ model and for $(q,k)=(3,1)$ the $D_4$ model on random triangulations, respectively.

  The desired quantities $W_{(p)}(z)$ can now be defined as follows: Given $\sigma\in S_q/(S_p\times S_{q-p})$, we define the partition function of the model on a random lattice with $p$ allowed, equally weighted colours on a single connected boundary containing a marked point as

\begin{eqnarray}
\label{discfct}
	W_{(p|\sigma)}(z)=\frac{1}{N}\left\langle\mathrm{tr}\frac{1}{z-X_{(p|\sigma)}}\right\rangle\ ,\quad X_{(p|\sigma)}=\sum_{i=1}^pX_{\sigma(i)}\ ,
\end{eqnarray}

\noindent where here and in what follows $\langle\cdot\rangle$ denotes the average with respect to the measure \eref{pottsmeasure} and $z$ denotes the fugacity of a boundary link. As a result of the permutation symmetry, for given $p$, all $|S_q/(S_p\times S_{q-p})|={q \choose p}$ partition functions $W_{(p|\sigma)}(z)$ are described by the same function, so that we henceforth abbreviate $W_{(p)}(z):=W_{(p|\sigma_0)}(z)$ for a representative $\sigma_0$ and denote the spectral density of the sum $X_{(p|\sigma)}$ by $\rho_{(p)}(z)$. Note that for $p=1$, our definition of $W_{(p)}(z)$ reduces to the one studied in \cite{Kazakov88,Daul94,ZJ99,Bonnet99,Eynard99}. 

We conclude this section with a helpful lemma which expresses the partition function as single integral over effective matrix variables $X_0$ and $P_\pm$ by a series of integral transformations. Thinking of $\{X_i\}_{i=1}^q$ as coordinates on configuration space, these simply correspond to linear canonical transformations on the corresponding phase space. This circumvents a notorious difficulty presented by the measure \eref{pottsmeasure}, which when written as a function of the eigenvalues of the matrices $X_i$, $i>0$, leads to a complicated integral over the unitary group \cite{Kazakov88,Daul94}. 

\begin{lem} 
\label{lem-Z} Let $h>0$ and abbreviate the integral transformations

\begin{eqnarray}
\gamma_{\pm}(X)&=\int_{\mathbb{R}}\D P_\pm f(P)\E^{-\frac{N}{2}\mathrm{tr}P_\pm^2}\E^{N\mathrm{tr}P_\pm X/\sqrt{\E^{\pm 2h}-1}}\ ,\\
\gamma'_{\pm}(P)&=\int_\Gamma\D X f(X)\E^{N\mathrm{tr}PX\sqrt{1-\E^{\mp 2h}}}\ ,
\end{eqnarray}

\noindent where the subscripts below the integrals indicate the integration cycle for the corresponding eigenvalues. Then up to an overall constant, the partition function in \eref{pottsmeasure} can be written as

\begin{eqnarray}
	\label{Z1}
Z_{N,q}
	&=\int_{\mathbb{R}}\mathrm{d}P_+\ \E^{-\frac{N}{2}(1-\E^{-2h})\mathrm{tr}P_+^2}\left(\gamma'_+\left[\E^{-N\mathrm{tr}U}\right](P_+)\right)^q\\
	\label{Z2}
	&=\int_\mathbb{R}\mathrm{d}X_0\ \gamma_+\left[(\gamma'_+[\E^{-N\mathrm{tr}U}])^p\right](X_0)\ \gamma_-\left[(\gamma'_-[\E^{-N\mathrm{tr}U}])^{q-p}\right](X_0)\ \\
	\label{Z3}
	&=\int_\mathbb{R}\mathrm{d}X_0\left(\prod_{i=1}^q\int_\Gamma\mathrm{d}X_i\ \E^{-N\mathrm{tr}U(X_i)}\right)\nonumber\\&\quad\times\gamma_+[1]\left(X_0+2\sinh(h)\sum_{i=1}^pX_i\right)\nonumber\\&\quad\times\gamma_-[1]\left(X_0-2\sinh(h)\sum_{i=p+1}^q X_i\right)\ .
\end{eqnarray}
\end{lem}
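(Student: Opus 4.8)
The plan is to prove the three equivalent expressions for $Z_{N,q}$ by performing a sequence of Gaussian integral transformations that decouple the nearest-neighbour interaction terms $\E^{N\mathrm{tr}X_iX_j}$ in the measure \eref{pottsmeasure}. The essential obstacle is the quadratic form in the exponent coupling the matrices: because all $q$ matrices interact pairwise with the same weight, the coupling matrix is (up to a shift) proportional to the all-ones matrix $J$, whose spectrum is degenerate and easily diagonalised. The strategy is therefore to diagonalise this coupling, introduce auxiliary Hubbard--Stratonovich fields $P_\pm$ to linearise the resulting quadratic terms, and then read off each of the three forms as a different stage of integrating out subsets of the variables.

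First I would rewrite the interaction factor $\prod_{\langle ij\rangle}\E^{N\mathrm{tr}X_iX_j}$ together with the Gaussian part $\prod_i\E^{-\frac{N}{2}\mathrm{tr}X_i^2}$ coming from the $z^2/2$ piece of each $V_i$. The combined exponent is $-\frac{N}{2}\mathrm{tr}\sum_{ij}X_iM_{ij}X_j$ for a coupling matrix $M$ with $1$ on the diagonal and $-1$ off-diagonal, i.e. $M=(1+\alpha)\mathbb{1}-\alpha J$ for suitable $\alpha$; its eigenvectors split into the symmetric mode $X_0\propto\sum_iX_i$ and the traceless modes. The parameter $h$ is chosen so that the two eigenvalues $\E^{\pm 2h}-1$ and $1-\E^{\mp 2h}$ match the structure appearing in $\gamma_\pm$ and $\gamma'_\pm$; I would fix $h$ explicitly by comparing with $2\sinh(h)$ which appears in \Eref{Z3}. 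To obtain \eref{Z1}, I would first integrate out the $U(N)$-rotations implicit in the symmetric structure and perform the Gaussian $P_+$-integral: introducing a single Hubbard--Stratonovich field $P_+$ conjugate to $\sum_iX_i$ linearises the symmetric mode, after which each $X_i$-integral factorises and contributes an identical factor $\gamma'_+[\E^{-N\mathrm{tr}U}](P_+)$, giving the $q$-th power. The overall Gaussian weight $\E^{-\frac{N}{2}(1-\E^{-2h})\mathrm{tr}P_+^2}$ is precisely the残余 weight of $P_+$ after completing the square.

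Next, to pass to \eref{Z2}, I would split the $q$ matrices into the two groups $\{1,\dots,p\}$ and $\{p+1,\dots,q\}$ and introduce a second auxiliary field so that the single collective variable $X_0$ couples to both groups with opposite signs; this is exactly the statement that $\gamma_+$ and $\gamma_-$ act on the $p$-fold and $(q-p)$-fold products respectively, sharing the common integration variable $X_0$. The equivalence of \eref{Z1} and \eref{Z2} then amounts to a change of integration variable from $P_+$ to $X_0$ together with a Fubini rearrangement, using the definition of the transforms $\gamma_\pm$ and $\gamma'_\pm$ as mutually (nearly) inverse Gaussian convolutions. Finally, \eref{Z3} is obtained by simply unfolding the innermost transforms $\gamma'_\pm$ back into explicit $X_i$-integrals over the cycle $\Gamma$, at which point the arguments $X_0\pm2\sinh(h)\sum X_i$ of $\gamma_\pm[1]$ emerge directly from the linear exponent $N\mathrm{tr}P_\pm X\sqrt{1-\E^{\mp2h}}$ after the $P_\pm$-integration has been carried out against $\gamma_\pm[1]$.

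The main obstacle I anticipate is bookkeeping the convergence and the choice of integration cycles: the transforms $\gamma_\pm$ and $\gamma'_\pm$ involve Gaussians with factors $\sqrt{\E^{\pm2h}-1}$ and $\sqrt{1-\E^{\mp2h}}$, one of which is imaginary depending on the sign, so the contours $\Gamma$ for the $X$-integrals and $\mathbb{R}$ for the $P_\pm$-integrals must be chosen to keep every Gaussian integral absolutely convergent while still permitting the interchange of integration order (Fubini) and the completion of squares. Establishing that these contour rotations are legitimate — equivalently, that the formal linear canonical (phase-space) transformation alluded to in the text is an honest deformation of contours with no crossed singularities — is where the real care is needed; the algebraic matching of exponents and prefactors, by contrast, is routine once $h$ is fixed and the coupling matrix $M$ is diagonalised.
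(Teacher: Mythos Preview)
Your broad strategy---Hubbard--Stratonovich linearisation of the collective mode and Fubini rearrangements---is exactly what the paper uses, but one specific step in your plan is mistaken and would not close. You propose to diagonalise the $q\times q$ coupling matrix $M$ and then \emph{fix} $h$ so that its eigenvalues match $\E^{\pm2h}-1$. That cannot work: the coupling in \eref{pottsmeasure} contains no parameter $h$ at all, and the lemma asserts the three representations hold for \emph{every} $h>0$ (the overall constant absorbs the $h$-dependence). The spectrum of $M$ has one symmetric eigenvalue and a $(q-1)$-fold degenerate one, neither of which is tied to $h$; so there is nothing to match. The role of $h$ is purely that of a free deformation parameter in the auxiliary Gaussians $\gamma_\pm,\gamma'_\pm$, later sent to $\infty$ in the proof of Proposition~\ref{prop-curve}.

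The paper's own argument avoids any diagonalisation. It runs in the opposite order from yours: first show \eref{Z2}${}={}$\eref{Z3} by writing out $\gamma_\pm[(\gamma'_\pm[\cdot])^n]$ explicitly and swapping the $P_\pm$ and $X_i$ integrations (Fubini, justified by absolute convergence on the cycle $\Gamma$); then show \eref{Z3}${}={}Z_{N,q}$ by recognising that the single $X_0$-integral of the two Gaussians $\gamma_+[1]\cdot\gamma_-[1]$ reproduces, up to a constant, $\E^{N\mathrm{tr}(\sum_i X_i)^2/2}$, which is exactly the pairwise interaction in \eref{pottsmeasure} once the diagonal $z^2/2$ pieces of $V_i$ are separated off; finally derive \eref{Z1} from \eref{Z2} by performing the $P_-$ and then the $X_0$ integrations. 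No eigenvalue matching is needed---each step is a direct Gaussian computation with $h$ carried along as a spectator. Your concern about convergence and contour choice is well placed and is precisely where the paper invokes absolute convergence on $\Gamma$ to justify Fubini; the algebra, as you say, is routine once that is granted.
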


\begin{proof}
We begin by showing equality of \eref{Z2} and \eref{Z3}, and then equality to $Z_{N,q}$. Subsequently showing equality to \eref{Z1} completes the proof. By definition, we can write

\begin{eqnarray}
\label{integralexp}
\gamma_{\pm}\left[\left(\gamma'_\pm[\E^{-N\mathrm{tr}U}]\right)^n\right](X_0)&=\int_\mathbb{R}\mathrm{d}P_\pm\ \E^{-N\mathrm{tr}P^2/2}\ \E^{N\mathrm{tr}X_0 P_\pm/\sqrt{\E^{\pm 2h}-1}}\nonumber\\&\quad\times\left(\prod_{i=1}^n\int_\Gamma\mathrm{d}X_i\ \E^{-N\mathrm{tr}U(X_i)}\ \E^{N\mathrm{tr}P_\pm X_i\sqrt{1-\E^{\mp 2h}}}\right)\ .
\end{eqnarray}

\noindent In general, there are $\mathrm{deg}\ U'=k+1$ independent cycles $\Gamma$ that render this iterated integral absolutely convergent for finite $N$. Hence we can apply the Fubini-Tonelli-theorem, that is, exchange the order of integration:

\begin{eqnarray}
\gamma_{\pm}\left[\left(\gamma'_\pm[\E^{-N\mathrm{tr}U}]\right)^n\right](X_0)
	&= \left(\prod_{i=1}^n\int_\Gamma\mathrm{d}X_i\ \E^{-N\mathrm{tr}U(X_i)}\right)\nonumber\\&\quad\times\gamma_\pm[1]\left(X_0\pm2\sinh(h)\sum_{i=1}^n X_i\right)\ .
\end{eqnarray}

\noindent Inserting this result into \eref{Z2} proves equality to \eref{Z3}. To obtain equality to $Z_{N,q}$, note that up to an overall multiplicative constant,

\begin{eqnarray}
\E^{N\mathrm{tr}\left(\sum_{i=1}^qX_i\right)^2/2}&=\int_\mathbb{R}\mathrm{d}X_0\ \gamma_+[1]\left(X_0+2\sinh(h)\sum_{i=1}^p X_i\right)\nonumber\\&\quad\times\gamma_-[1]\left(X_0-2\sinh(h)\sum_{i=p+1}^q X_i\right)\ .
\end{eqnarray}

\noindent Inserting this result into \eref{Z3} and interchanging the order of integration between $X_0$ and $X_i$ by the same argument proves equality to $Z_{N,q}$. It remains to show equivalence to \eref{Z1}. Starting from \eref{Z3}, we may use \eref{integralexp} to write the action of $\gamma_+$ on $(\gamma'_+[\E^{-N\mathrm{tr}U}])^p$ and of $\gamma_-$ on $(\gamma'_-[\E^{-N\mathrm{tr}U}])^{q-p}$ as Gaussian integrals over two matrices $P_+$, $P_-$, respectively. Performing the integration over $P_-$ and subsequently $X_0$, we are left with \eref{Z1}. As a cross-check, it is straightforward to confirm that \eref{Z1} equals our initial definition of $Z_{N,q}$
by writing out the $q^{\mathrm{th}}$ power of $\gamma'_+[\E^{-N\mathrm{tr}U}]$ as a product of integrals over $X_i$, $i=1\dots q$ and reversing the order of integration with $P_+$.
\end{proof}

\section{Planar limit}
\label{sec:planar}

\noindent This section is concerned with the explicit evaluation of $W_{(p)}(z)$ in the planar limit and is organised as follows: Subsection \ref{subsec:saddle} expresses $W_{(p)}(z)$ via the $p$-independent spectrum of the matrix $Y\equiv \sqrt{1-\E^{-2h}}P_+$  appearing in \eref{Z1}. The spectrum is then obtained explicitly in Subsection \ref{subsec:solution} -- a problem first solved in \cite{Daul94,ZJ98} and rederived here for arbitrary $q\neq 4$. Finally, in Subsection \ref{subsec:voiculescu}, we discuss how Voiculescu's formula \eref{Voicu} arises as a special case of our results in the limit of vanishing interaction strength of the Potts model. For any hermitian matrix $X$ with eigenvalues $\{x_i\}_{i=1}^N$, we define the large-$N$ spectral density

\begin{equation}
\rho_X(x)=\lim_{N\to\infty}\frac{1}{N}\left\langle\sum_{i=1}^N\delta(x-x_i)\right\rangle\ ,
\end{equation}

\noindent and introduce its Stieltjes transform

\begin{equation}
W_X(z)=\int_{\mathrm{supp}\ \rho_X}\mathrm{d}x\frac{\rho_X(x)}{z-x}\ ,\quad z\notin\mathrm{supp}\ \rho_X\ .
\end{equation}

\noindent To streamline the presentation, for any pair of $N\times N$ matrices $(X,P)$, we also define the averages

\begin{eqnarray}
\label{def-g}
	G^X_P(z)&=\frac{1}{N}\frac{\partial}{\partial z}\ln\left\langle\det_{1\leq k, l\leq N}\E^{Nx_kp_l}\right\rangle_{p_N=z}\ ,\quad z\notin \mathrm{supp}\ \rho_P\ ,\\
	G^P_X(z)&=\frac{1}{N}\frac{\partial}{\partial z}\ln\left\langle\det_{1\leq k, l\leq N}\E^{Nx_kp_l}\right\rangle_{x_N=z}\ ,\quad z\notin \mathrm{supp}\ \rho_X\ .
\end{eqnarray} 

\noindent The key property of the above functions is that $G^X_P(G_X^P(z))=z\left(1+\mathcal{O}(1/N)\right)$ for large $N$ \cite{Matytsin93,ZJ98}.

\subsection{Saddle point equations}
\label{subsec:saddle}

\noindent We begin by stating the main result of this section, Proposition \ref{prop-curve}. This rests on Lemmas \ref{lem-loop} and \ref{lem-SPE}, which we then derive 
before concluding with the proof of the proposition.

\begin{prop} 
\label{prop-curve}
Let the random matrix $P_+$ be defined as in Lemma \ref{lem-Z}, and set $Y=\sqrt{1-\E^{-2h}}P_+$. Then for $N\to\infty$, the spectral density of the sum of $p$ matrices distributed according to \eref{pottsmeasure} is given by\footnote{Throughout we denote by $f(z)_+$ the value of a multivalued function $f(z)$ on the physical sheet, and by $f(z)_-$ the value on the sheet connected to the physical sheet by the physical cut.}

\begin{equation}
\label{curve1}
\rho_{(p)}(z)=\frac{1}{2\pi\I}\left[G^Y_{(p)}(z)_+-G^Y_{(p)}(z)_-\right]\ ,
\end{equation} 
\noindent where $G^Y_{(p)}(z)$ is the functional inverse of 
\begin{equation}
\label{curve2}
G_Y^{(p)}(z)=\frac{p}{q}(z-W_Y(z)_-)+\frac{q-p}{q}W_Y(z)_+\ .
\end{equation}
\end{prop}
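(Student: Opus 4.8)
The plan is to recognise that \eref{curve1} is nothing but the discontinuity formula across the cut for the Stieltjes transform $W_{(p)}(z)$ of $\rho_{(p)}$, so that the real content of the proposition is the identification $G^Y_{(p)}=W_{(p)}$ together with the characterisation \eref{curve2} of its functional inverse. The natural starting point is the form \eref{Z1} of the partition function: setting $Y=\sqrt{1-\E^{-2h}}\,P_+$, the Gaussian weight becomes $\E^{-\frac{N}{2}\mathrm{tr}\,Y^2}$ and each factor $\gamma'_+[\E^{-N\mathrm{tr}U}](P_+)=\int_\Gamma\D X\,\E^{-N\mathrm{tr}U(X)+N\mathrm{tr}\,YX}$ is the characteristic function of a single matrix in the external field $Y$. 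Thus, conditionally on $Y$, the $q$ matrices $X_i$ are independent and identically distributed, each sitting in the same external field, and $X_{(p)}=\sum_{i=1}^pX_i$ is a partial sum of $p$ of them. I would first make this conditional factorisation precise and reduce $W_{(p)}(z)$ to an average over the $Y$-ensemble of a resolvent computed in fixed external field.

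With this reduction in hand, the first lemma (\ref{lem-loop}) should be a loop equation relating the resolvent of the partial sum to the Matytsin-type functions of \eref{def-g}. The key input is that the Itzykson--Zuber coupling $\E^{N\mathrm{tr}\,YX_i}$ pins, at leading order in $N$, the eigenvalues of each $X_i$ to those of $Y$, the pairing being governed by $G^{X}_{Y}$ and its inverse $G^{Y}_{X}$; the relation $G^X_P(G^P_X(z))=z\,(1+\mathcal{O}(1/N))$ is exactly what makes the inversion asserted by the proposition meaningful. I would introduce a source conjugate to $X_{(p)}$ only, and not to the spectator matrices $X_{p+1},\dots,X_q$, differentiate $\ln Z_{N,q}$, and use the conditional independence to express $G^Y_{(p)}$ as a combination of single-matrix response functions evaluated in the $Y$-background. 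The asymmetry between the $p$ summed matrices and the $q-p$ spectators is precisely the $p$-dependence that survives the otherwise $S_q$-symmetric construction.

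The second lemma (\ref{lem-SPE}) should supply the saddle-point equations fixing the $p$-independent spectrum of $Y$, equivalently $W_Y$ together with its continuation $W_Y(z)_-$ across the cut. Since $Y$ couples symmetrically to all $q$ matrices in \eref{Z1}, its spectrum cannot depend on $p$, and the entire $p$-dependence of $W_{(p)}$ must enter through the observable: of the $q$ identical factors in \eref{Z1}, exactly $p$ carry the source probing $X_{(p)}$ while $q-p$ do not. I expect the two analytic branches of $W_Y$ to encode the two roles a single matrix factor can play, summed versus spectator, so that combining the loop equation with the saddle-point data assigns the summed block to the combination $z-W_Y(z)_-$ and the spectator block to $W_Y(z)_+$, each weighted by its share $p/q$ and $(q-p)/q$ of the $q$ factors. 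Inverting the resulting relation then yields $G^Y_{(p)}=W_{(p)}$ and hence \eref{curve1}. As a consistency check I would confirm that, when the Potts interaction is switched off, the combination \eref{curve2} collapses onto Voiculescu's formula \eref{Voicu}, reproducing the free convolution of $p$ copies of the single-matrix law.

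The main obstacle will be pinning down the two-sheeted structure of \eref{curve2} rigorously: determining which response is evaluated on the physical sheet and which across the cut, and extracting the precise convex weights $p/q$ and $(q-p)/q$ rather than merely the correct $p$-linear form. This demands careful bookkeeping of the Itzykson--Zuber/Matytsin large-$N$ asymptotics and of the functional inversions relating $G^Y_{(p)}$ and $G^{(p)}_Y$, including control of the $\mathcal{O}(1/N)$ corrections and a justification that the saddle point determining the $Y$-spectrum is left unchanged by the insertion of the source conjugate to $X_{(p)}$.
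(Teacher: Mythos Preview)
Your proposal contains a genuine error and also misidentifies the route the paper takes.

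First, the error. You write that ``the real content of the proposition is the identification $G^Y_{(p)}=W_{(p)}$''. This is false: $G^Y_{(p)}$ and $W_{(p)}$ only share the same \emph{discontinuity} across the physical cut, they are not the same function. The case studies make this explicit: for $(q,k)=(1,2)$ one has $G^Y_{(1)}(z)_+=z+W_{(1)}(z)$, and for $(q,k)=(2,1)$ with $p=1$ one even has $G^Y_{(1)}(z)_+=z+G^{X_2}_{X_1}(z)$, which is not $W_{(1)}$ at all on the physical sheet. What makes \eref{curve1} true is the Matytsin--Zinn-Justin fact that $G^Y_X$ and $W_X$ have the same jump across $\mathrm{supp}\,\rho_X$, not an equality of the functions. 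Your plan to ``invert the resulting relation'' to conclude $G^Y_{(p)}=W_{(p)}$ therefore cannot succeed as stated.

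Second, the mechanism. You propose to work from \eref{Z1} alone, inserting a source coupled only to $X_{(p)}$ and using conditional independence given $Y$. The paper does something quite different: it passes through the representations \eref{Z2}--\eref{Z3}, which introduce the auxiliary matrix $X_0$ and keep the parameter $h$ free. Lemma~\ref{lem-loop} is not a loop equation tying $W_{(p)}$ to Matytsin functions; it is a translation-invariance (loop) identity in $X_0$ showing $W_{X_0}(z)=W_M(z-W_{X_0}(z))$ for $M=\E^{-h}\sum_{i\leq p}X_i+\E^{h}\sum_{i>p}X_i$, i.e.\ $X_0$ is freely convolved with $M$. Lemma~\ref{lem-SPE} compares the saddle points of \eref{Z1} and \eref{Z2} to produce a real-part identity for $G_Y^{\bar X_0}$ with $\bar X_0=X_0/(2\sinh h)$. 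Only \emph{after} analytically continuing this identity and then sending $h\to\infty$ does one obtain $\rho_{\bar X_0}\to\rho_{(q-p)}$ via Lemma~\ref{lem-loop}, which finally identifies $G^Y_{\bar X_0}$ with $G^Y_{(q-p)}$ and yields \eref{curve2} (with the $p\leftrightarrow q-p$ swap encoded in \eref{continuationGp}). None of $X_0$, the $h$-dependence, or the $h\to\infty$ limit appears in your outline, yet these are precisely what produce the convex weights $p/q$ and $(q-p)/q$ and fix which branch of $W_Y$ sits where.
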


\begin{rem} Generally, $G^Y_{(p)}(z)$ is a multi-valued function so that we need to specify the sheet on which \eref{curve1} is evaluated -- this ambiguity is fixed by requiring $\lim_{z\to\infty}z\ W_{(p)}(z)=1$. 
\end{rem}

\begin{cor} 
\label{curve3}
When $G_{(p)}^Y(z)$ satisfies an algebraic equation of the form $F_{(p)}(z,G_{(p)}^Y(z))=0$, then $G_{(q-p)}^Y(z)$ follows from

\begin{equation}
F_{(p)}\left(G_{(q-p)}^Y(z)-z,G_{(q-p)}^Y(z)\right)=0\ .
\end{equation}
\end{cor}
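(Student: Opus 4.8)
The plan is to read the hypothesis geometrically. Since $G^Y_{(p)}$ is by definition the functional inverse of $G_Y^{(p)}$, the relation $F_{(p)}(z,G^Y_{(p)}(z))=0$ says precisely that the algebraic curve $\{F_{(p)}(a,b)=0\}$ is the graph $a=G_Y^{(p)}(b)$. Proving the corollary therefore reduces to verifying that the point $(w-z,w)$, with $w=G^Y_{(q-p)}(z)$ (equivalently $z=G_Y^{(q-p)}(w)$), also lies on this curve, i.e.\ that $F_{(p)}(w-z,w)=0$.

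First I would carry out the elementary linear identity relating the two forward maps. Writing $W_\pm:=W_Y(w)_\pm$ and applying \eref{curve2} for both $p$ and $q-p$, a direct computation gives $w-G_Y^{(q-p)}(w)=\frac{p}{q}(w-W_+)+\frac{q-p}{q}W_-$, which is nothing but $G_Y^{(p)}(w)$ with the two sheets $W_+$ and $W_-$ interchanged; call this branch-swapped function $\tilde G_Y^{(p)}(w)$. Thus $w-z=\tilde G_Y^{(p)}(w)$ whenever $z=G_Y^{(q-p)}(w)$, and the corollary follows once we know $F_{(p)}(\tilde G_Y^{(p)}(w),w)=0$.

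The crux --- and the step I expect to be the main obstacle --- is to argue that $F_{(p)}$ annihilates the branch-swapped map $\tilde G_Y^{(p)}$ as well as $G_Y^{(p)}$ itself. The point is that $G_Y^{(p)}$ and $\tilde G_Y^{(p)}$ arise from the two possible assignments of the pair of branches $W_Y(\cdot)_+$, $W_Y(\cdot)_-$ appearing in \eref{curve2}; since these two sheets are exchanged by the monodromy around the branch point terminating the physical cut (the very cut along which, by the footnote to Proposition \ref{prop-curve}, the two sheets are glued), $G_Y^{(p)}$ and $\tilde G_Y^{(p)}$ are Galois-conjugate branches of one and the same algebraic function over $\mathbb{C}(z)$. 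Consequently they satisfy the same irreducible polynomial relation, and since $F_{(p)}$ is a multiple of that relation it must vanish on both. Equivalently, one may observe that $F_{(p)}$ is obtained by eliminating $W_Y$ from \eref{curve2} through a resultant, a construction manifestly symmetric under $W_Y(\cdot)_+\leftrightarrow W_Y(\cdot)_-$. Either route yields $F_{(p)}(w-z,w)=0$, which upon substituting $w=G^Y_{(q-p)}(z)$ is exactly the claimed relation. I would take some care to check that the relevant sheet of $G^Y_{(q-p)}$ --- the one fixed by the normalisation $\lim_{z\to\infty}zW_{(q-p)}(z)=1$ --- is indeed the branch reached by this swap, so that the identity holds on the physical sheet and not merely as an equality of unparametrised algebraic curves.
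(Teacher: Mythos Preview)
Your proposal is correct and follows essentially the same route as the paper. The identity you compute, $w-G_Y^{(q-p)}(w)=\tilde G_Y^{(p)}(w)$, is exactly the paper's continuation relation \eref{continuationGp}, $G_Y^{(q-p)}(z)_\pm=z-G_Y^{(p)}(z)_\mp$, and your Galois/monodromy argument that $F_{(p)}$ annihilates the swapped branch is precisely the paper's remark that ``the analytic continuation merely takes us from one solution to the above equation to another''; your closing caveat about sheets is unnecessary here, since the corollary is a statement about the algebraic curve and holds for every branch once it holds for one.
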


\noindent As announced, we proceed to formulate the main Lemmas involved in the proof of the above results:

\begin{lem} 
\label{lem-loop} In the limit $N\to\infty$, the matrix $M=\E^{-h}\sum_{i=1}^pX_i+\E^{h}\sum_{i=p+1}^qX_i$ satisfies
\begin{equation}
W_{X_0}(z)=W_M(z-W_{X_0}(z))\ .
\end{equation}
\end{lem}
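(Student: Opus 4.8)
The plan is to read off from the representation \eref{Z3} the joint law of the auxiliary matrix $X_0$ and the matrix $M$, and to observe that their coupling is purely Gaussian, so that the relation between their spectra is governed by free probability alone — this being precisely the situation in which Voiculescu's formula applies, in contrast to the genuinely correlated $M$ that must be treated elsewhere. Concretely, $\gamma_{\pm}[1]$ is a Gaussian integral and evaluates, up to constants, to $\gamma_{\pm}[1](A)\propto\exp\left(\frac{N}{2}\mathrm{tr}A^2/(\E^{\pm 2h}-1)\right)$. Substituting the arguments $A_+=X_0+2\sinh(h)\sum_{i=1}^pX_i$ and $A_-=X_0-2\sinh(h)\sum_{i=p+1}^qX_i$ into \eref{Z3} and collecting the terms that involve $X_0$, I expect the $X_0$-dependent part of the exponent to collapse, after using $\E^{\pm 2h}-1=\pm\E^{\pm h}\,2\sinh h$ and $\E^{-h}-\E^{h}=-2\sinh h$, to
\begin{equation}
N\left[-\frac{1}{2}\mathrm{tr}X_0^2+\mathrm{tr}(X_0 M)\right],\qquad M=\E^{-h}\sum_{i=1}^pX_i+\E^{h}\sum_{i=p+1}^qX_i,
\end{equation}
with all remaining factors depending only on the $X_i$. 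The mixed $\mathrm{tr}(X_0 M)$ term is exactly the bilinear coupling, the cross-terms proportional to $\mathrm{tr}(X_0\sum_iX_i)$ reorganising into $M$ by construction.

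The decisive step is then to complete the square, $-\frac{1}{2}\mathrm{tr}X_0^2+\mathrm{tr}(X_0 M)=-\frac{1}{2}\mathrm{tr}(X_0-M)^2+\frac{1}{2}\mathrm{tr}M^2$, which shows that, conditionally on the configuration of the $X_i$ (equivalently on $M$), the difference $G:=X_0-M$ is distributed as a GUE matrix with weight $\propto\E^{-\frac{N}{2}\mathrm{tr}G^2}$, independent of $M$. Thus $X_0\stackrel{d}{=}M+G$ with $G$ an independent Gaussian matrix. By the asymptotic freeness of Gaussian random matrices, $G$ is asymptotically free from $M$ as $N\to\infty$, so the spectral densities obey the free convolution $\rho_{X_0}=\rho_M\boxplus\rho_{\mathrm{sc}}$, where $\rho_{\mathrm{sc}}$ is the standard semicircle law associated with the weight $\E^{-\frac{N}{2}\mathrm{tr}G^2}$.

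It remains to turn this into the stated functional equation. The semicircle law of this normalisation has R-transform $R^{G}(z)=z$ by \eref{rtrafo}, so Voiculescu's formula \eref{Voicu} gives $W_{X_0}^{-1}(w)=W_M^{-1}(w)+w$. Writing $w=W_{X_0}(z)$ and applying $W_M$ to $W_M^{-1}(w)=z-w$ yields $W_{X_0}(z)=W_M\!\left(z-W_{X_0}(z)\right)$, which is the claim.

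The main obstacle I anticipate is not the algebra but the two justifications surrounding it: first, cleanly extracting from \eref{Z3} that the conditional law of $X_0$ given $M$ is a shifted GUE with the precise normalisation $\E^{-\frac{N}{2}\mathrm{tr}(X_0-M)^2}$ — the coefficient must come out to exactly $1/2$ in order that $R^{G}(z)=z$ and hence that $W_{X_0}(z)$ enter the argument of $W_M$ with coefficient unity; and second, invoking asymptotic freeness at the level of the limiting Stieltjes transforms despite the non-real integration cycles $\Gamma$ for the $X_i$ in \eref{Z3}. I would address the latter by working throughout with the large-$N$ spectral densities and their Stieltjes transforms, for which the free-convolution identity extends by analytic continuation, since $M$ possesses a well-defined limiting spectrum $\rho_M$ and $G$ is unitarily invariant.
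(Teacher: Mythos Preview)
Your argument is correct, and the algebra checks out: evaluating $\gamma_{\pm}[1](A)\propto\E^{\frac{N}{2}\mathrm{tr}A^2/(\E^{\pm 2h}-1)}$ and inserting the arguments from \eref{Z3} gives exactly the $X_0$-exponent $N[-\frac{1}{2}\mathrm{tr}X_0^2+\mathrm{tr}(X_0M)]$ with the claimed $M$, so that $X_0\stackrel{d}{=}M+G$ with $G$ an independent standard GUE. Voiculescu's formula with $R^G(z)=z$ then yields the functional relation as you derive it.

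However, this is a genuinely different route from the paper. The paper proves the lemma by a loop-equation argument: it perturbs $X_0\to X_0+\varepsilon\bigl((z-X_0)^{-1}(z'-M)^{-1}+\mathrm{h.c.}\bigr)$ in the representation \eref{Z3}, collects the order-$\varepsilon$ contributions from the variation of the measure and of the Gaussian factors $\gamma_{\pm}[1]$, uses large-$N$ factorisation $\langle\mathrm{tr}A\,\mathrm{tr}B\rangle\simeq\langle\mathrm{tr}A\rangle\langle\mathrm{tr}B\rangle$, and then evaluates the resulting identity at $z'=z-W_{X_0}(z)$ to kill the residual mixed resolvent. Your approach instead \emph{integrates out} the Gaussian coupling up front, recognises the problem as a free additive convolution with the semicircle, and imports Voiculescu's formula directly. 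What you gain is conceptual transparency---the lemma is literally a special case of the free-convolution identity the paper discusses in Section~\ref{subsec:voiculescu}, which is an elegant observation. What the paper's approach buys is self-containedness: it does not invoke asymptotic freeness as an external input, and it sidesteps the issue you rightly flag about the non-real cycles $\Gamma$ for the $X_i$, since the loop equation is a purely algebraic consequence of translation invariance of $\D X_0$ and holds formally regardless of the contour for the other matrices.
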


\begin{proof}
This result follows from the translation invariance of the measure $\mu$ defined in \eref{pottsmeasure}. Setting $\sigma=\mathrm{id}$ in \eref{Z3} without loss of generality, consider the shift by a small hermitian matrix\footnote{See also \cite{Eynard99} for an earlier application of this method of ``loop equations'' to the Potts matrix model.}

\begin{equation}
X_0\longrightarrow X_0'=X_0+\varepsilon\left(\frac{1}{z-X_0}\frac{1}{z'-M}+\mathrm{h.c.}\right)\ ,\quad \varepsilon\ll 1\ ,
\end{equation}

\noindent understood as a formal power series in $z$, $z'$. When $M=\E^{-h}\sum_{i=1}^pX_i+\E^{h}\sum_{i=p+1}^qX_i$, the variation of the product of the two gaussian integrals 

\begin{eqnarray}
I\left(\{X_i\}_{i=0}^q\right)&\equiv\gamma_+[1]\left(X_0+2\sinh(h)\sum_{i=1}^pX_{i}\right)\nonumber\\&\quad\times \gamma_-[1]\left(X_0-2\sinh(h)\sum_{i=p+1}^q X_{i}\right)
\end{eqnarray}

\noindent and the measure $\mathrm{d}X_0$ is respectively given to leading order by

\begin{eqnarray}
I\left(\{X_i\}_{i=0}^q\right)& \longrightarrow I(\{X_i\}_{i=0}^q)+\varepsilon\mathrm{tr}\frac{1}{z-X_0}\frac{1}{z'-M}(M-X_0)+\mathcal{O}(\varepsilon^2)\ ,\\
\mathrm{d}X_0 & \longrightarrow \mathrm{d}X_0\left(1+\varepsilon\mathrm{tr}\frac{1}{z-X_0}\mathrm{tr}\frac{1}{z-X_0}\frac{1}{z'-M}+\mathcal{O}(\varepsilon^2)\right)\ .
\end{eqnarray}

\noindent Demanding invariance of $Z_{N,q}$ to leading order in $\varepsilon$ and approximating $\langle\mathrm{tr}A\mathrm{tr}B\rangle=\langle\mathrm{tr}A\rangle\langle\mathrm{tr}B\rangle+\mathcal{O}(1/N)$ yields, as $N\to\infty$,

\begin{equation}
W_M(z')-W_{X_0}(z)=\frac{1}{N}\left\langle\mathrm{tr}\frac{1}{z-X_0}\frac{1}{z'-M}\right\rangle\left(W_{X_0}(z)-z+z'\right)\ .
\end{equation}

\noindent Evaluating the above at $z'=z-W_{X_0}(z)$ proves the Lemma.
\end{proof}

\begin{lem} 
\label{lem-SPE}
Let the matrices $P_+$, $X_0$ be defined as in Lemma \ref{lem-Z}, and set $P_+=Y/\sqrt{1-\E^{-2h}}$ and $X_0=2\sinh(h)\bar{X}_0$. Then for $z\in\mathrm{supp}\ \rho_Y$ as $N\to\infty$,

\begin{equation}
\label{SPE}
\mathrm{Re}\ G_Y^{\bar{X}_0}(z)=\left(\frac{2p}{q}-1\right)\mathrm{Re}\ W_Y(z)+\left(\frac{1}{1-\E^{-2h}}-\frac{p}{q}\right)z\ .
\end{equation}
\end{lem}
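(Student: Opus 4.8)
The plan is to derive the saddle-point equation for the matrix integral \eref{Z1}–\eref{Z3} written in terms of the variables $Y=\sqrt{1-\E^{-2h}}P_+$ and $\bar X_0=X_0/(2\sinh h)$, and then take its real part on the support of $\rho_Y$. First I would start from the representation \eref{Z3} and perform a variation with respect to the eigenvalues of $P_+$, exactly as in the loop-equation argument of Lemma \ref{lem-loop}, but now treating $P_+$ (equivalently $Y$) as the dynamical matrix whose spectrum is to be extremal. The key is that in \eref{Z1} the integrand is a product of the Gaussian weight $\E^{-\frac{N}{2}(1-\E^{-2h})\mathrm{tr}P_+^2}$ and the $q$-th power of $\gamma'_+[\E^{-N\mathrm{tr}U}]$, so in the planar limit the effective potential felt by an eigenvalue $y$ of $Y$ separates into a purely quadratic piece coming from the Gaussian factor and a logarithmic-derivative piece coming from the coupling to the $X_i$. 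I expect the coupling term to produce precisely the transform functions $G^{\bar X_0}_Y$ and $W_Y$ defined in \eref{def-g}, using the inversion property $G^X_P(G^P_X(z))=z$ quoted after \eref{def-g}.

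The core of the argument is to identify how the $p$ matrices summed with coefficient $\E^{-h}$ and the $q-p$ matrices summed with $\E^{h}$ (as in $M$ of Lemma \ref{lem-loop}) enter the saddle point for $Y$. Concretely, I would write the planar free energy as a function of the resolvent data, extremise over the eigenvalue density $\rho_Y$, and collect the stationarity condition. The Gaussian factor contributes the term proportional to $(1-\E^{-2h})^{-1}z$, while the interaction with $X_0$ through $\gamma_\pm$ yields $G^{\bar X_0}_Y(z)$ on the left and a $W_Y$-contribution weighted by the relative numbers $p/q$ and $(q-p)/q$ of matrices carrying the two exponential factors. The asymmetry between the $\E^{\pm h}$ weights is what produces the coefficient $\tfrac{2p}{q}-1$ multiplying $W_Y$: the $p$ "plus" matrices and the $q-p$ "minus" matrices contribute with opposite signs, and their difference is $(p-(q-p))/q=2p/q-1$. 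Taking the real part restricts attention to $z\in\mathrm{supp}\,\rho_Y$, where the principal-value parts combine and the discontinuity terms drop out, leaving \eref{SPE}.

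The main obstacle I anticipate is the careful bookkeeping of the two integral transforms $\gamma_+$ and $\gamma_-$ and their adjoints $\gamma'_\pm$: one must track how the spectral data of $X_0$ and of the $X_i$ propagate through the nested Gaussian integrals to produce the $G$-transforms, and in particular justify that in the planar limit the composite saddle is governed by $G^{\bar X_0}_Y$ rather than by the individual $X_i$ spectra. This is where the factorisation $\langle\mathrm{tr}A\,\mathrm{tr}B\rangle=\langle\mathrm{tr}A\rangle\langle\mathrm{tr}B\rangle+\mathcal{O}(1/N)$ and the inversion identity for $G^X_P$ must be invoked with care. A secondary subtlety is correctly assigning the "$+$" versus "$-$" boundary values so that the real part is unambiguous on the cut; I would handle this by writing $W_Y(z)_\pm$ explicitly and using $\mathrm{Re}\,W_Y=\tfrac12(W_{Y,+}+W_{Y,-})$ on the support, consistent with the conventions fixed in the footnote to Proposition \ref{prop-curve}.
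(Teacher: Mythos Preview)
Your broad approach—extremising with respect to the eigenvalues of $Y$—is right, but the specific mechanism you describe for producing the coefficient $\frac{2p}{q}-1$ is not what actually happens, and the essential step is missing.

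The result follows by combining \emph{two} saddle-point equations, one from \eref{Z1} and one from \eref{Z2}; neither alone suffices. Starting from \eref{Z2}, expand $\gamma_+$ as an integral over $P_+$, diagonalise $Y$ and $\bar X_0$, and apply the Itzykson--Zuber formula \eref{IZint} to the coupling $\E^{N\mathrm{tr}X_0P_+/\sqrt{\E^{2h}-1}}$: this is what produces $G_Y^{\bar X_0}$. The factor $\gamma_-[(\gamma'_-[\E^{-N\mathrm{tr}U}])^{q-p}](X_0)$ depends on $X_0$ alone and does not enter the variation in $y_n$. Extremising then gives, schematically,
\[
\mathrm{Re}\,G_Y^{\bar X_0}(y_n)+\mathrm{Re}\,W_Y(y_n)-\frac{y_n}{1-\E^{-2h}}+p\,\Phi(y_n)=0\ ,
\]
with $\Phi(y_n)=N^{-1}\partial_{y_n}\ln\gamma'_+[\E^{-N\mathrm{tr}U}]$ an \emph{unknown} function. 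This, and not the bookkeeping of $\gamma_\pm$, is the real obstacle. The second equation comes from the one-matrix representation \eref{Z1}, whose extremisation yields
\[
2\,\mathrm{Re}\,W_Y(y_n)-y_n+q\,\Phi(y_n)=0\ .
\]
Eliminating $\Phi$ between the two gives \eref{SPE} directly. Hence the coefficient $\frac{2p}{q}-1$ arises as $\frac{p}{q}\cdot 2-1$ from this elimination; it is \emph{not} a difference between $p$ ``plus'' and $q-p$ ``minus'' contributions, since the $q-p$ matrices sit entirely inside the $\gamma_-$ factor and never appear in the $P_+$ saddle point.

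Two smaller points. First, \eref{Z3} contains neither $P_+$ nor $P_-$ as integration variables, so it is not a useful starting point for varying in the eigenvalues of $P_+$; you need \eref{Z1} and the $P_+$-expanded form of \eref{Z2}. Second, $G^{\bar X_0}_Y$ does not arise ``through $\gamma_\pm$'' in any direct way: it comes from the Itzykson--Zuber integral over the relative angular variables between $Y$ and $\bar X_0$, via the definition \eref{def-g}.
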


\begin{proof} 
This result follows from the saddle point approximation applied to \eref{Z1} and \eref{Z2} in Lemma \ref{lem-Z}. Setting 

\begin{equation}
\label{diagYX0}
Y=U\mathrm{diag}(\{y_n\}_{n=1}^N)U^{\dagger}\ , \quad \frac{X_0}{2\sinh(h)}=V\mathrm{diag}(\{x_n\}_{n=1}^N)V^{\dagger}\ ,
\end{equation}

\noindent with $U, V \in U(N)$, we can perform the integration over $U^{\dagger}V$ using the well-known result \cite{Harishchandra56,Itzykson80} 

\begin{equation}
\label{IZint}
	\int_{U(N)}\mathrm{d}U\ \E^{\lambda N\mathrm{tr}[YUXU^{\dagger}]}=\mathrm{const.}\times \frac{\det_{1\leq k,l\leq N}\E^{\lambda Ny_kx_l}}{\Delta(x)\Delta(y)}\quad\forall\lambda\in\mathbb{C}\ ,
\end{equation}

\noindent where $\mathrm{d}U$ is the normalised Haar measure. It follows that, for the exponent of the integrand in \eref{Z2} to have an extremum, the eigenvalues $\{x_n\}$ and $\{y_n\}$ 
must satisfy

\begin{eqnarray}
\label{SPE1}
0&=\frac{1}{N}\left(\frac{\partial}{\partial y_n}\ln\det_{k,l}\E^{Ny_kx_l}+\sum_{k\neq n}\frac{1}{y_n-y_k}\right)-\frac{y_n}{1-\E^{-2h}}\nonumber\\&\quad+\frac{p}{N}\frac{\partial}{\partial y_n}\ln\gamma'_+[\E^{-N\mathrm{tr}U}]\left(\frac{Y}{\sqrt{1-\E^{-2h}}}\right)\ .
\end{eqnarray}

\noindent On the other hand, from \eref{Z1} and \eref{SPE1} we find

\begin{equation}
\label{SPE3}
0=\frac{2}{N}\sum_{k\neq n}\frac{1}{y_n-y_k}+\frac{q}{N}\frac{\partial}{\partial y_n}\ln\gamma'_+[\E^{-N\mathrm{tr}U}]\left(\frac{Y}{\sqrt{1-\E^{-2h}}}\right)-y_n\ ,
\end{equation}

\noindent which allows us to eliminate $\gamma'_+[\E^{-N\mathrm{tr}U}](Y/\sqrt{1-\E^{-2h}})$ between the above and \eref{SPE1}\footnote{It is the analysis of this quantity that leads us to the exact solution for $W_Y(z)$ in the next subsection.}. Taking $N\to\infty$ and using \eref{def-g} yields \eref{SPE} as advertised. 
\end{proof}

\begin{proof}[Proof of Proposition \ref{prop-curve} and Corollary \ref{curve3}] 
\Eref{curve2} follows from Lemma \ref{lem-SPE} after analytic continuation. Following an argument in \cite{ZJ99}, we note that the derivative w.r.t. $x_N$ of the logarithm of \eref{IZint} is an entire function of $x_N$, which implies that as $N\to\infty$, $G_X^Y(z)$ and $W_X(z)$ have the same discontinuity across the real axis. Applying this to our situation, we conclude that when $z\in\mathrm{supp}\ \rho_Y$,

\begin{eqnarray}
\label{discont1}
G_Y^{\bar{X}_0}(z)_{\pm}&=\mathrm{Re}\ G_Y^{\bar{X}_0}(z)\pm \I\pi\rho_Y(z)\ ,\\
\label{discont2}
G^Y_{\bar{X}_0}(z)_{\pm}&=\mathrm{Re}\ G^Y_{\bar{X}_0}(z)\pm \I\pi\rho_{\bar{X}_0}(z)\ .
\end{eqnarray}

\noindent For $h>0$, it follows from \eref{Z1} that $Y$ is hermitian, so $W_Y(z)$ has no singularities in the complex plane away from the real axis. Hence we can analytically continue \eref{SPE} to $z\in\mathbb{C}\setminus\mathrm{supp}\ \rho_Y$ using

\begin{equation}
G_Y^{\bar{X}_0}(z)_+-G_Y^{\bar{X}_0}(z)_-=W_Y(z)_+-W_Y(z)_-\ ,
\end{equation}

\noindent which results in

\begin{equation}
\label{spe-complex}
G_Y^{\bar{X}_0}(z)=\frac{p}{q}\left(W_Y(z)_+-z\right)-\frac{q-p}{q}W_Y(z)_-+\frac{z}{1-\E^{-2h}}\ .
\end{equation}

\noindent To obtain \eref{curve1}, note first that from Lemma \ref{lem-loop}, we find

\begin{eqnarray}
W_{\bar{X}_0}(z)=W_{M/(2\sinh(h))}\left(z-\frac{1}{4\sinh(h)^2}W_{\bar{X}_0}(z)\right)\ ,
\end{eqnarray}

\noindent where we used the property $W_X(z)=\lambda W_{\lambda X}(\lambda x)$ for real $\lambda$. In the limit $h\to\infty$, $M/(2\sinh(h))\to\sum_{i=p+1}^qX_i$ and consequently, from the above,

\begin{equation}
W_{\bar{X}_0}(z)\to W_{(q-p)}\left(z+\mathcal{O}(\E^{-2h})\right)\quad \mathrm{as}\ h\to\infty\ .
\end{equation}

\noindent We infer that in this limit, $\rho_{\bar{X}_0}(z)\to \rho_{(q-p)}(z)$, which in conjunction with \eref{discont2} yields

\begin{equation}
\rho_{(q-p)}(z)=\lim_{h\to\infty}\frac{1}{2\pi\I}\left[G^Y_{\bar{X}_0}(z)_+-G^Y_{\bar{X}_0}(z)_-\right]\ .
\end{equation}

\noindent We thus obtain the desired expressions \eref{curve1} and \eref{curve2} from the above and \eref{spe-complex} by identifying  $G^Y_{\bar{X}_0}(z)=G_{(q-p)}^Y(z)$, and noting that according to \eref{spe-complex}, the analytic continuation of $G_Y^{(q-p)}(z)$ through $\mathrm{supp}\ \rho_Y$ is given by $z-G_Y^{(p)}(z)$,

\begin{equation}
\label{continuationGp}
G_Y^{(q-p)}(z)_\pm=z-G_Y^{(p)}(z)_\mp\ .
\end{equation}
 
\noindent The functional inversion relation then follows from $G^Y_{\bar{X}_0}\circ G_Y^{\bar{X}_0}=\mathrm{id}$. Finally, to show Corollary \ref{curve3}, observe that according to \eref{continuationGp}, for an algebraic function $F_{(p)}$ in two variables,

\begin{equation}
F_{(p)}\left(z,G_{(p)}^Y(z)\right)=0\quad \mathrm{implies}\quad F_{(p)}\left(z'-G_Y^{(q-p)}(z'),z'\right)=0\ ,
\end{equation}

\noindent since the analytic continuation merely takes us from one solution to the above equation to another. Evaluating at $z'=G^Y_{(q-p)}(z)$ proves the Corollary.
\end{proof}

\subsection{General solution}
\label{subsec:solution}

\noindent The main result in Proposition \ref{prop-curve} is expressed via the functional inverse of $G_Y^{(p)}(z)$.
Generally, this functional inversion is most easily achieved by means of an explicit parametric form of $W_Y(z)$; Proposition \ref{prop-ycurve} below provides just that for general $q\neq 4$ when $U(z)$ is cubic, i.e. $k=1$. Our conventions for elliptic functions are those of Gradshtein and Ryzhik \cite{Gradshteyn07} and are spelled out in \ref{app:auxprob}.

\begin{prop} 
\label{prop-ycurve}
Let $k=1$, $\nu=\arccos((q-2)/2)/\pi$ and assume $\mathrm{supp}\ \rho_Y=[z_-,z_+]\subset\mathbb{R}$ as $N\to\infty$. Then

\begin{equation}
W_Y(z(\sigma))=\frac{1}{4-q}\left(\frac{qt_2}{t_3}+2z(\sigma)\right)+\sum_{n\geq 0}\frac{f_n}{n!}\frac{
		\partial^n}{\partial\sigma^n}\left[g(\sigma;\nu)+g(\sigma;-\nu)\right]\ ,
\end{equation}

\noindent with the coefficients $\{f_n\}$ determined by the requirement $\lim_{z\to\infty}zW_Y(z)=1$ and
	
\begin{eqnarray}
	z(\sigma)&=\delta_U+\sqrt{(z_+-\delta_U)(z_--\delta_U)}\left(\frac{\vartheta_2(\pi\sigma|\tau)}{\vartheta_3(\pi\sigma|\tau)}\right)^2\ ,\\
\label{g-function}
		g(\sigma;\nu)&=\E^{\I\pi\nu\sigma}\frac{\vartheta_3(\pi\sigma+\pi\tau\nu/2|\tau)}{\vartheta_3(\pi\sigma|\tau)}\ ,
	\end{eqnarray}

\noindent where $\tau$ and $\delta_U$ are implicit functions of $t_2$, $t_3$.
\end{prop}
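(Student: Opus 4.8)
The plan is to derive a parametric representation of $W_Y(z)$ by converting the saddle-point constraint into a scalar Riemann-Hilbert problem on the cut $\mathrm{supp}\,\rho_Y=[z_-,z_+]$, and then to solve that problem explicitly with elliptic functions. The starting point is \eref{SPE} from Lemma~\ref{lem-SPE}, which relates $\mathrm{Re}\,G_Y^{\bar X_0}(z)$ to $\mathrm{Re}\,W_Y(z)$ and $z$ on the support. Together with the inversion property $G^Y_{\bar X_0}\circ G_Y^{\bar X_0}=\mathrm{id}$ and the discontinuity relations \eref{discont1}--\eref{discont2}, this should yield a closed functional equation relating the two boundary values $W_Y(z)_+$ and $W_Y(z)_-$ across the cut. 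Specifically, for $k=1$ the potential $U'(z)=t_2+t_3 z^2/z$ contributes only low-order terms, so I expect the equation to take the form of a finite-difference (shift) relation in which the nontrivial angular data is encoded entirely through the parameter $\nu=\arccos((q-2)/2)/\pi$.

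The next step is to trade the variable $z$ for the uniformizing coordinate $\sigma$. The two-sheeted structure of $W_Y$ over the $z$-plane, branched at $z_\pm$, is a genus-zero object; the map $z(\sigma)$ built from the ratio $\bigl(\vartheta_2(\pi\sigma|\tau)/\vartheta_3(\pi\sigma|\tau)\bigr)^2$ is precisely the standard elliptic uniformization of such a cut, sending the two edges of $\mathrm{supp}\,\rho_Y$ to the real and shifted-by-$\tau/2$ translates in the $\sigma$-plane. Under this map the Riemann-Hilbert boundary condition becomes a \emph{linear functional (quasi-periodicity) condition} on $W_Y(z(\sigma))$ as $\sigma\to\sigma+\tau$, with a multiplier governed by $\E^{\I\pi\nu}$. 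The ansatz $g(\sigma;\pm\nu)$ in \eref{g-function} is exactly the quasi-periodic building block that solves this: the ratio $\vartheta_3(\pi\sigma+\pi\tau\nu/2|\tau)/\vartheta_3(\pi\sigma|\tau)$ picks up a controlled factor under $\sigma\to\sigma+\tau$, and the prefactor $\E^{\I\pi\nu\sigma}$ tunes the monodromy to match $\nu$. I would verify that $g(\sigma;\nu)+g(\sigma;-\nu)$, together with its $\sigma$-derivatives, spans the space of solutions with the correct quasi-periodicity, which justifies the series $\sum_{n\ge0}(f_n/n!)\,\partial_\sigma^n[\,\cdots]$ as the general solution of the homogeneous problem. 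The additive polynomial piece $\frac{1}{4-q}\bigl(qt_2/t_3+2z(\sigma)\bigr)$ is the particular solution fixed by the inhomogeneous term coming from the potential $U$, and the denominator $4-q$ is where the excluded value $q=4$ (i.e.\ $\nu=0$, the degenerate case) manifestly enters.

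The final step is to fix the remaining freedom. The coefficients $\{f_n\}$ and the implicit functions $\tau(t_2,t_3)$, $\delta_U(t_2,t_3)$ are pinned down by the analytic requirements on a genuine resolvent: $W_Y(z)$ must be single-valued and analytic off the cut, must have no spurious poles in the fundamental domain, and must satisfy the normalisation $\lim_{z\to\infty}zW_Y(z)=1$. Imposing regularity at the image of $z=\infty$ and matching the large-$z$ expansion term by term converts these into algebraic conditions on $f_n$, while the two endpoint (square-root vanishing) conditions at $z_\pm$ determine $\tau$ and $\delta_U$ as functions of the couplings. I expect the \textbf{main obstacle} to be showing that this elliptic ansatz does reproduce the correct finite-difference boundary condition with the multiplier set by $\nu$ — that is, checking that the quasi-periodicity of $g(\sigma;\pm\nu)$ under lattice translation matches the jump dictated by \eref{SPE} rather than merely producing \emph{some} doubly-quasi-periodic function. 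Verifying the absence of extraneous poles in the fundamental parallelogram (so that the sum over $n$ truncates or converges to a legitimate meromorphic function with the single allowed singularity structure) is the delicate technical point; the rest reduces to the elliptic-function identities collected in \ref{app:auxprob} and to routine matching of the asymptotics.
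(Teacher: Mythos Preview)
Your plan has a genuine gap at the step where the elliptic parametrisation enters. You propose to uniformise the single cut $[z_-,z_+]$ in the $z$-plane directly; but a two-sheeted cover branched at two points is genus zero and is uniformised by a \emph{rational} (Joukowski-type) map, not by theta functions. The reason elliptic functions appear in the paper is different and is precisely the step you are missing: the spectrum of $Y$ is governed not by \eref{SPE} but by the $p$-independent equation \eref{SPE3}, and evaluating the large-$N$ contribution of $\gamma'_+[\E^{-N\mathrm{tr}U}]$ for cubic $U$ via the Gross--Newman analysis produces an \emph{additional} square-root branch point at $z=\delta_U$. After the substitution $w=\sqrt{z-\delta_U}$, the saddle-point condition becomes the reflection equation
\[
2\,\mathrm{Re}\,f(w)+(2-q)\,f(-w)=\text{polynomial},\qquad w\in[w_-,w_+],
\]
which couples the values of $f$ on the cut $[w_-,w_+]$ to its values on the reflected interval $[-w_+,-w_-]$. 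It is this \emph{two-cut} problem in the $w$-plane that forces the torus uniformisation $w(\sigma)=\sqrt{\alpha\beta}\,\vartheta_2(\pi\sigma)/\vartheta_3(\pi\sigma)$, with $w\to -w$ realised as $\sigma\to\sigma+1$; the quasi-periodicity carrying the factor $\E^{\pm\I\pi\nu}$ is therefore along the \emph{real} period, not under $\sigma\to\sigma+\tau$ as you state.

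Concretely, starting from \eref{SPE} and the inversion $G^Y_{\bar X_0}\circ G_Y^{\bar X_0}=\mathrm{id}$ will not close on $W_Y$ alone: that route still contains the unknown $G_Y^{\bar X_0}$, and nothing in your outline explains where the parameter $\delta_U$ (which is visibly present in the statement via $z(\sigma)=\delta_U+\cdots$) comes from. The particular solution $\tfrac{1}{4-q}(qt_2/t_3+2z)$ and the homogeneous building blocks $g(\sigma;\pm\nu)$ are correct once the reflection problem above is in hand, but you need the Gross--Newman step and the $w=\sqrt{z-\delta_U}$ change of variables to get there; without them the argument does not produce the stated elliptic form.
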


\begin{proof} 
We can determine the spectrum of $Y$ from the saddle point equation \Eref{SPE3}
which is precisely the problem first considered in \cite{Daul94,ZJ99}. To our knowledge, the first large $N$ analysis of $\gamma'_+[\E^{-N\mathrm{tr}U}]$ 
for cubic $U$ was done, if in a slightly different context, by Gross and Newman in \cite{Gross91}. Using \cite[eqns. (2.10), (2.11)]{Gross91}, equation \eref{SPE3} can be expressed as 

\begin{eqnarray}
	\label{spe1}
		z=2\mathrm{Re}\ W_Y(z)&+\frac{q}{2}\int_{z_-}^{z^+}\frac{\mathrm{d}z'}{\sqrt{z'-\delta_U}}\frac{\rho_Y(z')}{\sqrt{z-\delta_U}+\sqrt{z'-\delta_U}}\nonumber\\ &+q\frac{\sqrt{z-\delta_U}}{\sqrt{t_3}}-q\frac{t_2}{2t_3}\ ,\qquad z\in[z_-,z_+]\ ,
	\end{eqnarray}

\noindent where $\delta_U$ solves the implicit equation 

\begin{equation}
\frac{t_2}{4t_3}+\delta_U=\frac{\sqrt{t_3}}{t_2^{3/2}}\int_{z_-}^{z^+}\mathrm{d}z\ \frac{\rho_Y(z)}{\sqrt{z-\delta_U}}\ .
\end{equation}

\noindent Let us resolve the branch point at $\delta_U$ by the change of variables $w(z)=\sqrt{z-\delta_U}$, and denote $w(z_{\pm})=w_{\pm}$. Introducing the auxiliary function

\begin{equation}
	\label{auxfct}
	f(w)=\int_{w_-}^{w_+}\mathrm{d}w'\frac{\rho_Y(\delta_U+w^{\prime 2})}{w-w'}\ ,
\end{equation}

\noindent we derive the two identities

\begin{eqnarray}
		\mathrm{Re}\ W_Y(z)&=\mathrm{Re}\ f(w(z))+f(-w(z))\;,\qquad z\in[z_-,z_+]\;,\\
		f(-w(z))&=-\frac{1}{2}\int_{z_-}^{z_+}\frac{\mathrm{d}z'}{\sqrt{z'-\delta_U}}\frac{\rho_Y(z')}{\sqrt{z-\delta_U}+\sqrt{z'-\delta_U}}\;.
	\end{eqnarray}

\noindent We can then rewrite \eref{spe1} in the equivalent form
	
\begin{equation}
	2\mathrm{Re}\ f(w)+(2-q)f(-w)=\delta_U+w^2-q\frac{w}{\sqrt{t_3}}+q\frac{t_2}{2t_3}\ , \quad w\in[w_-,w_+]\ .
\end{equation}

\noindent A particular polynomial solution to the above equation when $q\neq 4$ is

\begin{equation}
	f_{\mathrm{r.}}(w)=\frac{qt_2}{2t_3(4-q)}-\frac{w}{\sqrt{t_3}}+\frac{\delta_U+w^2}{4-q}\ .
\end{equation}

\noindent The general solution will therefore differ from the above by a function $f_{\mathrm{s.}}(w)=f_{\mathrm{r.}}(w)-f(w)$ holomorphic on $\mathbb{C}\setminus[w_-,w_+]$ satsifying the homogenous equation

\begin{equation}
	\label{hom1}
	2\mathrm{Re}\ f_{\mathrm{s.}}(w)+(2-q)f_{\mathrm{s.}}(-w)=0\ , \quad w\in[w_-,w_+]\ .
\end{equation}

\noindent We recover $\rho_Y(z)$ by inverting \eref{auxfct}, which, using the fact that $f_{\mathrm{r.}}(w)$ is analytic, gives
	
\begin{eqnarray}
	\label{lambdadens}
		\rho_Y(z)&=\frac{1}{2\pi\I}\left[f_{\mathrm{s.}}\left(\sqrt{z-\delta_U}\right)_--f_{\mathrm{s.}}\left(\sqrt{z-\delta_U}\right)_+\right]\ .
	\end{eqnarray}
	
\noindent From the above expressions it then follows that for $z\notin[z_-,z_+]$, $W_Y(z)$ is given by

\begin{eqnarray}	
	\label{lambdares}	W_Y(z)
		&=\frac{1}{4-q}\left(\frac{qt_2}{t_3}+2z\right)-f_{\mathrm{s.}}\left(\sqrt{z-\delta_U}\right)-f_{\mathrm{s.}}\left(-\sqrt{z-\delta_U}\right)\ .
	\end{eqnarray}

\noindent The general solution to \eref{hom1} was first derived in \cite{Eynard95} in the context of the $O(n)$ model on a random lattice and is presented in more detail in \ref{app:auxprob}. There we recall how $f_{\mathrm{s.}}(w)$ can be parametrised in terms of $g(\sigma;\nu)$, defined in \eref{g-function}, as\footnote{By abuse of notation, we distinguish the functions $w(\sigma)$ and $w(z)=\sqrt{z-\delta_U}$ solely by their arguments.}
	
\begin{eqnarray}
	\label{ellipticsol}
		f_{\mathrm{s.}}(w(\sigma))&=\sum_{n\geq 0}\frac{f_n}{n!}\frac{
		\partial^n}{\partial\sigma^n}\left[\E^{-\I \pi \nu}g(\sigma;\nu)+\E^{\I\pi\nu}g(\sigma;-\nu)\right]\ ,\\
w(\sigma)&=\sqrt{w_-w_+}\frac{\vartheta_2(\pi\sigma|\tau)}{\vartheta_3(\pi\sigma|\tau)}\ ,
	\end{eqnarray}
	
\noindent where $\nu=\arccos((2-q)/2)/\pi$ and $\tau=\I K'/K$, with $K$ and $K'$ respectively given by the complete elliptic integral of the first and second kind (cf. \eref{ellipticK}); the coefficients $\{f_n\}$ are entirely determined by the condition that $\lim_{z\to\infty} z\ W_Y(z)=1$. Inserting the above parametrisation into \eref{lambdares} completes the proof.
\end{proof}

\subsection{Derivation of Vouiculescu's formula for free convolution}
\label{subsec:voiculescu}

\noindent Here we show how our results provide a non-trivial generalisation of Voiculescu's formula for free convolution of probability distributions to a non-free situation. This is essentially an adaption of the derivations in \cite{ZJ99-2,Zee96} to the case where the ``external'' matrix follows a Gaussian distribution; gradually turning off the $O(q)$-symmetry breaking interactions of the Potts model, our formulae should reduce to Voiculescu's for free random variables. To confirm this is the case, it is convenient to consider the slightly generalised measure $\mu_\lambda$,

\begin{equation}
\label{pottsmeasure2}
\D\mu_\lambda(X_1,X_2,\dots X_q)=\frac{1}{Z_{N,q}^{\lambda}}\prod_{\langle i j\rangle}\E^{\lambda N\mathrm{tr}X_i X_j}\times \prod_{i=1}^q\E^{-N\mathrm{tr}V_i(x)}\D X_i\ ,\quad \lambda \geq 0\ ,
\end{equation}

\noindent which reduces to $\mu$ for $\lambda\to1$ (cf. \eref{pottsmeasure}) and should yield Voiculescu's formula for $\lambda\to 0$. Of course, the parameter $\lambda$ is redundant in that we may equivalently obtain $\mu$
by a suitable rescaling of $X_i$ and $\{t_m\}_{m=2}^{k+2}$; we are thus not departing from the initial parameter space of the model. 
The following holds for averages with respect to $\mu_\lambda$.

\begin{prop} 
Take $N\to\infty$. Then as $\lambda\to 0$, 
\begin{eqnarray}
\label{freelimit1}
G^M_{\lambda Y}(z)-W_Y(z)&\to R^M(z)\ ,\\
\label{freelimit2}
W_{(q)}^{-1}(z)&\to \sum_{i=1}^qW_{X_i}^{-1}(z)-\frac{q-1}{z}\ ,
\end{eqnarray}

\noindent where $R^M(z)$ denotes the R-transform of $W_M(z)$ as defined in \eref{rtrafo}. 
\end{prop}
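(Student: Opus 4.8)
The plan is to carry the coupling $\lambda$ of the deformed measure \eref{pottsmeasure2} explicitly through the saddle-point analysis of Subsection \ref{subsec:saddle} and to isolate the leading term as $\lambda\to0$. The two assertions are nested: \eref{freelimit1} is the per-source statement that an individual matrix coupled to the field contributes its R-transform, and \eref{freelimit2} is its $p=q$ aggregation. The structural point I would stress throughout is that the R-transform emerges directly from the saddle point of the angular integral \eref{IZint}, so that asymptotic freeness of the $\{X_i\}$ is an output of the computation rather than an assumption fed into it.

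First I would establish \eref{freelimit1}. The loop equation of Lemma \ref{lem-loop}, $W_{X_0}(z)=W_M(z-W_{X_0}(z))$, survives the replacement $\mu\to\mu_\lambda$ since its derivation uses only translation invariance of the measure. Writing $w=W_{X_0}(z)$ it rearranges to $W_{X_0}^{-1}(w)=W_M^{-1}(w)+w$, i.e. $R^{X_0}(w)=R^M(w)+w$, which exhibits $X_0$ as the free convolution of $M$ with a unit-variance semicircle and is a useful consistency check. By the definitions \eref{def-g}, $G^M_{\lambda Y}$ and $G^{\lambda Y}_M$ form a functional-inverse pair up to $\mathcal{O}(1/N)$, with $G^M_{\lambda Y}$ transporting the spectrum of the source $\lambda Y$ onto that of $M$. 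I would then expand the $\lambda$-deformed version of Lemma \ref{lem-SPE} to leading order: the coupling between $M$ and $Y$ carries an explicit factor inherited from the Hubbard--Stratonovich decoupling of Lemma \ref{lem-Z} and is switched off as $\lambda\to0$, and in this limit the combination $G^M_{\lambda Y}(z)-W_Y(z)$ collapses precisely onto $W_M^{-1}(z)-1/z=R^M(z)$, giving \eref{freelimit1}.

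For \eref{freelimit2} I would specialise Proposition \ref{prop-curve} to $p=q$. There \eref{curve2} degenerates to $G^{(q)}_Y(z)=z-W_Y(z)$, and since $G^Y_{(q)}=W_{(q)}$ on the physical sheet, functional inversion yields the exact identity $W_{(q)}^{-1}(w)=w-W_Y(w)$, valid for every $\lambda$. It then remains to evaluate $\lim_{\lambda\to0}W_Y$ from the saddle-point equation \eref{SPE3}. The crucial feature there is the factor $q$ multiplying the single-source term $\tfrac{1}{N}\partial_{y_n}\ln\gamma'_+[\E^{-N\mathrm{tr}U}]$: because $Y$ couples identically to all $q$ matrices, the decoupling \eref{Z1} presents the measure as a $q$-fold product of single-matrix factors sharing the same source, so this term contributes $q$ identical copies of the single-matrix response. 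Applying the same argument that proves \eref{freelimit1} to each such factor identifies that response with $R^{X_i}(w)=W_{X_i}^{-1}(w)-1/w$, whereupon the continuum form of \eref{SPE3} reorganises into $w-W_Y(w)\to\sum_{i=1}^q W_{X_i}^{-1}(w)-(q-1)/w$. Combined with the exact identity above this is \eref{freelimit2}, namely Voiculescu's formula \eref{Voicu} iterated $q$ times.

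The main obstacle I anticipate is the control of limits and of analytic structure, rather than any single algebraic manipulation. One must verify that $\lambda\to0$ commutes with $N\to\infty$ and with the saddle-point evaluation, and track the $\sqrt{\lambda}$-dependent rescalings that $P_\pm$, and hence $Y$ and $M$, inherit from Lemma \ref{lem-Z}, so that $G^M_{\lambda Y}$ and $W_Y$ stay finite and the cancellations producing $R^M$ are genuine rather than artefacts of a degenerate scaling. Equally delicate is the sheet prescription: the R-transform is read off from the analytic continuation through $\mathrm{supp}\,\rho_Y$ governed by \eref{continuationGp}, so \eref{freelimit1} must be proved on the correct sheet and not merely as a formal power series, ensuring the weak-coupling limit respects the branch structure underlying Proposition \ref{prop-curve}.
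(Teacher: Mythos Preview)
Your route differs from the paper's and contains a concrete error that breaks the derivation of \eref{freelimit2}. The paper does not go through Proposition~\ref{prop-curve} or Lemmas~\ref{lem-loop}--\ref{lem-SPE} at all; it instead compares the two integral representations \eref{Zalpha1} and \eref{Zalpha2} of $Z_{N,q}^\lambda$ to obtain, exactly and at every $\lambda$, the additivity
\[
G_{\lambda Y}^{X_1+\cdots+X_q}(z)-W_{\lambda Y}(z)=\sum_{i=1}^q\bigl(G_{\lambda Y}^{X_i}(z)-W_{\lambda Y}(z)\bigr)\ .
\]
The limit $\lambda\to0$ is then elementary: the Harish-Chandra--Itzykson--Zuber determinant trivialises, forcing $G_{X_i}^{\lambda Y}\to W_{X_i}$ and hence (by functional inversion) $G_{\lambda Y}^{X_i}\to W_{X_i}^{-1}$, while $W_{\lambda Y}(z)\to1/z$ because the rescaled semicircle collapses to a point mass. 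Both \eref{freelimit1} and \eref{freelimit2} drop out immediately.

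The gap in your argument is the claim ``$G^Y_{(q)}=W_{(q)}$ on the physical sheet''. This is false: the case studies \eref{G-1}, \eref{G-2b}, \eref{G-3c} all give $G_{(q)}^Y(z)_+=z+W_{(q)}(z)$. The two functions share the discontinuity $\rho_{(q)}$ across the cut --- which is all that \eref{curve1} asserts --- but differ by the entire function $z$. Your ``exact identity'' $W_{(q)}^{-1}(w)=w-W_Y(w)$ therefore does not hold, and the subsequent reduction via \eref{SPE3} is built on it. Separately, your argument for \eref{freelimit1} is incomplete: the relation $R^{X_0}(w)=R^M(w)+w$ from Lemma~\ref{lem-loop} is correct but does not by itself yield $G^M_{\lambda Y}-W_{\lambda Y}\to R^M$; the step you describe as ``collapses precisely onto $W_M^{-1}(z)-1/z$'' \emph{is} the statement to be proved, and the mechanism driving it --- trivialisation of the angular integral together with $W_{\lambda Y}\to1/z$ --- is absent from your sketch.
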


\begin{proof} 
According to Lemma \ref{lem-Z}, we can write the partition function $Z_{N,q}^{\lambda}$
via the fiducial matrix $Y=\sqrt{1-\E^{-2h}}P_+$ as

\begin{eqnarray}
\label{Zalpha1}
Z_{N,q}^{\lambda}&=\int_\mathbb{R}\mathrm{d}Y\ \E^{-N\mathrm{tr}Y^2/2}\prod_{i=1}^q\int_\Gamma\mathrm{d}X_i\ \E^{-N\mathrm{tr}[V_i(X_i)+X_i^2/2-\lambda X_iY]}\\
\label{Zalpha2}
&\equiv \int_\mathbb{R}\mathrm{d}Y\ \E^{-N\mathrm{tr}Y^2/2}\left(\prod_{i=1}^q\int_\Gamma\mathrm{d}X_i\ \E^{-N\mathrm{tr}[V_i(X_i)+X_i^2/2]}\right)\E^{\lambda N\mathrm{tr}Y\sum_{i=1}^q X_i}\ .
\end{eqnarray}

\noindent Diagonalising the matrices and integrating over the unitary group, we can write, taking the limit $N\to\infty$,

\begin{equation}
\frac{1}{N}\frac{\partial}{\partial z}\ln\left.\int_\Gamma\mathrm{d}X_i\ \E^{-N\mathrm{tr}[V_i(X_i)+X_i^2/2-\lambda X_iY]}\right|_{y_N=z}=G_{\lambda Y}^{X_i}(z)-W_{\lambda Y}(z)\ ,
\end{equation}

\noindent where we used the definition \eref{def-g}; comparing \eref{Zalpha1} and \eref{Zalpha2}, this implies

\begin{equation}
\label{Voicu-gen}
G_{\lambda Y}^{X_1+X_2+\dots X_q}(z)-W_{\lambda Y}(z)=\sum_{i=1}^q\left(G_{\lambda Y}^{X_i}(z)-W_{\lambda Y}(z)\right)\ .
\end{equation}

\noindent Now consider the limit $\lambda\to 0$. On the one hand,

\begin{equation}
1=\lim_{\lambda\to 0}\frac{\E^{\lambda Nx_i y_j}}{\Delta(x)\Delta(y)}\ ,
\end{equation} 

\noindent from which it follows that 

\begin{eqnarray}
0&=\lim_{\lambda\to 0}\left(G_{X_i}^{\lambda Y}(z)-W_{X_i}(z)\right)\ ,\\
0&=\lim_{\lambda\to 0}\left(G_{X_1+X_2+\dots X_q}^{\lambda Y}(z)-W_{(q)}(z)\right)\ .
\end{eqnarray} 

\noindent On the other hand, as can be seen from \eref{SPE3} in this limit, the matrix $Y$ will follow a gaussian distribution, so its spectral density approaches a semi-circle. As a result, the spectral density of the rescaled matrix $\lambda Y$ approaches a delta function so that $W_{\lambda Y}(z)\to 1/z$. Together with the relation $G_X^Y\circ G_Y^X=\mathrm{id}$, this means that indeed 

\begin{equation}
G^M_{\lambda Y}(z)-W_{\lambda Y}(z)\to R^M(z)\quad\mathrm{as}\ \lambda\to 0\ ,
\end{equation}

\noindent from comparison with the definition \eref{rtrafo}. Lastly, inserting the above into \eref{Voicu-gen} yields \eref{freelimit2}.
\end{proof}

\noindent For $q=2$, \eref{freelimit2} indeed gives Voiculescu's formula \eref{Voicu}. It is in this sense that the function $G^M_{\lambda Y}(z)-W_{\lambda Y}(z)$ lifts the notion of the R-transform, so that \eref{Voicu-gen} represents a nontrivial extension of Voiculescu's formula to the addition of \em correlated \em random matrices, distributed according to $\mu_\lambda$.
It is instructive to compare \eref{Voicu-gen} for $\lambda=1$ to the expressions in Proposition \ref{prop-curve} of the previous section more explicitly. Since from \eref{Zalpha1} and \eref{Zalpha2}

\begin{eqnarray}
G_Y^{X_i}(z)&=\frac{q-1}{q}W_Y(z)_++\frac{1}{q}\left(z-W_Y(z)_-\right)\ ,\\ G_Y^{X_1+\dots X_q}(z)&=z-W_Y(z)_-\ ,
\end{eqnarray}

\noindent we observe upon comparison to \eref{curve2} that indeed

\begin{equation}
\label{identificationG}
G_Y^{(1)}(z)=G_Y^{X_i}(z)\ ,\quad G_Y^{(q)}(z)=G_Y^{X_1+\dots X_q}(z)\ .
\end{equation}

\noindent Hence, for the $S_q$-invariant case 
$V_i(z)\equiv U(z)-z^2/2\ \forall i$, our main result in Proposition \ref{prop-curve} \em further \em generalises this result to the sum of $p\leq q$ matrices: the function $G^{(p)}_Y(z)-W_Y(z)$ generalises the R-transform of $W_{(p)}(z)$, and \Eref{curve2} generalises Voiculescu's formula. 

\section{Case studies}
\label{sec:cases}

\noindent Here we consider the cases $(q,k)=(1,2)$, $(2,1)$, and $(3,1)$, which describe hard dimers, the $A_3$ and the $D_4$ model on planar triangulations, respectively. For the former two models, the functions $W_{(p)}(z)$ have been known for a while \cite{Staudacher89,Carroll96,Carroll97} -- the fact that our general formula in Proposition \ref{prop-curve} reproduces these results lends credence to our extension to the $D_4$ model. Unlike models with irrational values of $\arccos((q-2)/2)/\pi$, all of these share the simplification that they can be described by polynomial equations: We derive explicit expressions for the polynomials $F_{(p)}(x,y)$ satisfying 

\begin{equation}
\label{pol}
F_{(p)}\left(z,G_{(p)}^Y(z)\right)=0\ ,\quad 1\leq p\leq q\ ,
\end{equation}

\noindent which define a family of algebraic curves $\mathcal{C}_{(p)}=\{(x,y)\in\mathbb{C}^2|F_{(p)}(x,y)=0\}$. Here we exclusively consider solutions for which the spectral densities have connected support such that the constants $c^{(p)}_{i,j}$ that appear in the expressions for $F_{(p)}(x,y)$ below may be determined as functions of $\{t_m\}_{m=2}^{k+2}$ via the requirement that $\mathcal{C}_{(p)}$ be of genus zero. In \ref{app:asympt}, we describe the resulting analytic structure of $G_Y^{(p)}(z)$ and $G_{(p)}^Y(z)$. 

\subsection{$(q,k)=(1,2)$ -- Hard dimers}
\label{subsec:q=1}

\noindent This model describes hard dimers on planar triangulations and was first solved on the sphere by Staudacher \cite{Staudacher89}. According to \eref{pottsmeasure} and \eref{Z1}, the partition function can be written as both a one- and two-matrix integral,

\begin{eqnarray}
Z_{N,1}&=\int\mathrm{d}X\ \E^{-N\mathrm{tr}[U(X)-X^2/2]}\\
\label{Z2-2}
	&=\int\mathrm{d}Y\ \E^{-N\mathrm{tr}Y^2/2}\int\mathrm{d}X\ \E^{-N\mathrm{tr}[U(X)-XY]}\ .
\end{eqnarray}

\noindent Using the definition \eref{def-g} in the planar limit, the above expressions imply the following relations: 

\begin{eqnarray}
\label{case1a}
z&=W_Y(z)_-+G_Y^X(z)_+\ ,\\
\label{case1b}
U'(z)&=W_{(1)}(z)_-+G_X^Y(z)_+\ ,\\
\label{case1c}
U'(z)&=W_{(1)}(z)_-+W_{(1)}(z)_++z\ ,
\end{eqnarray}

\noindent The first line above is indeed consistent with \eref{curve2} in Proposition \ref{prop-curve} and \eref{identificationG} in Subsection \ref{subsec:voiculescu}. Through the relation $G_Y^X\circ G_X^Y=\mathrm{id}$, \eref{case1a} and \eref{case1b} dictate the analytic structure and asymptotic behaviour of $G_Y^X(z)$, the result of which is given in detail in \ref{ex-dimers}. This allows us to compute the spectral curve using \eref{curve2},

\begin{eqnarray}
\label{polynomial0}
F_{(1)}(x,y)&=x^4-x^3y+\frac{t_3}{t_4}x^3+\frac{y^2}{t_4}-\frac{t_3}{t_4}x^2y+\frac{t_2+t_4}{t_4}x^2\nonumber\\&\quad-\frac{t_2+1}{t_4}xy-c_{0,0}x+ c_{1,1}y+c_{1,0}\ .
\end{eqnarray}

\noindent According to Corollary \ref{curve3}, the functions $G_{(p)}^Y(z)$ then satisfy

\begin{equation}
F_{(1)}\left(G_{(0)}^Y(z)-z,G_{(0)}^Y(z)\right)=0\ ,\quad F_{(1)}\left(z,G_{(1)}^Y(z)\right)=0\ ,
\end{equation} 

\noindent which in turn determines their analytic structure and asymptotic behaviour on all sheets -- see \ref{ex-dimers}. Finally, comparing to \eref{case1c}, we conclude that 

\begin{equation} 
\label{G-1}
G_{(1)}^Y(z)_+=z+W_{(1)}(z)\ ,\quad G_{(1)}^Y(z)_-=t_4z^3+t_3z^2+t_2z-W_{(1)}(z)\ .
\end{equation}

\subsection{$(q,k)=(2,1)$ -- Ising model}
\label{subsec:q=2}

\begin{figure}[h]
\centering
\begin{pspicture*}(0,1)(10,4)
\psset{unit=.8cm}

\rput(3.8,4.5){$A_3$}
\psdot[dotsize=.15](3,4)
\psdot[dotsize=.15](4,4)
\psdot[dotsize=.15](5,4)
\psline(3,4)(4,4)
\psline(5,4)(4,4)

\rput(1,2.5){$A_2$}
\psdot[dotsize=.15](2,3)
\psdot[dotsize=.15](2,2)
\psline(2,3)(2,2)

\rput(3,3){$\mathbb{I}$}
\rput(4,3){$\sigma$}
\rput(5,3){$\varepsilon$}
\rput(3,2){$\varepsilon$}
\rput(4,2){$\sigma$}
\rput(5,2){$\mathbb{I}$}


\psline[linestyle=dashed](2.5,4)(5,1.5)

\rput(9,3){$
\begin{tabular}[b]{lll}
		&$A_2\times A_3$&$X_{(p|\sigma)}$\\
		\hline
		$\mathbb{I}$&$(1,1)$&$X_1$\\
		$\varepsilon$&$(2,1)$&$X_2$\\
		$\sigma$&$(2,2)$&$X_1+X_2$\\
		\hline
\end{tabular}$}

\end{pspicture*}
	\caption{Integrable boundary conditions for the Ising model ($q=2$) on a flat lattice are labelled by the nodes of the graph $A_2\times A_3$; the dashed line separates two equivalent choices of a fundamental domain.}
	\label{fig:statesising}
\end{figure}
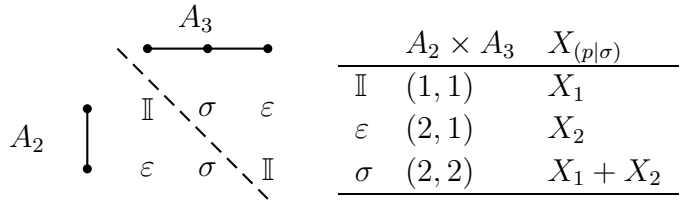

\noindent This corresponds to the Ising model on planar triangulations, which is the $A_3$ model in the classification of \cite{Pasquier86} and was first solved on the sphere by Kazakov and Boulatov \cite{Boulatov86,Kazakov86}; it is described by the much-studied symmetric hermitian two-matrix model. The three integrable boundary conditions of the model are captured by the linear combinations of $X_i$ shown in Figure \ref{fig:statesising} \cite{Cardy89}: $W_{(1)}(z)$ captures the $S_2\simeq\mathbb{Z}_2$-doublet $\{\mathbb{I},\varepsilon\}$, $W_{(2)}(z)$ the $\mathbb{Z}_2$-singlet $\{\sigma\}$. From \eref{Z1}, we see that in this case the partition function can be written as

\begin{eqnarray}
\label{Z2-1}
Z_{N,2}&=\int\mathrm{d}X_1\mathrm{d}X_2\ \E^{-N\mathrm{tr}[U(X_1)+U(X_2)]}\E^{N\mathrm{tr}(X_1+X_2)^2/2}\\
\label{Z2-2}
	&=\int\mathrm{d}Y\ \E^{-N\mathrm{tr}Y^2/2}\left(\int\mathrm{d}X\ \E^{-N\mathrm{tr}[U(X)-XY]}\right)^2\ .
\end{eqnarray}

\noindent On the other hand, changing variables to $X_\pm=X_1\pm X_2+t_2/t_3$ and integrating out $X_-$, we obtain the equivalent one-matrix representation going back to \cite{Eynard92},

\begin{eqnarray}
Z_{N,2}
&=\mathrm{const.}\times\int\frac{\mathrm{d}X_+\ \E^{-N\mathrm{tr}U_+(X_+)}}{\sqrt{\mathrm{Det}(X_+\otimes\mathbb{I}+\mathbb{I}\otimes X_+)}}\ ,
\end{eqnarray}

\noindent where $U'_+(z)=t_3z^2/4-z-t_2(4-t_2)/(4t_3)$ and capital $\mathrm{Det}$ denotes the determinant on $N^2\times N^2$ matrices. Using the definition \eref{def-g} in the planar limit, the above expressions respectively imply the following set of equations:

\begin{eqnarray}
\label{case2a}
z&=W_Y(z)_--W_Y(z)_++2G_Y^X(z)_+\ ,\\
\label{case2b}
U'(z)&=W_{(1)}(z)_-+G_X^Y(z)_+\ ,\\
\label{case2c}
U'(z)&=W_{(1)}(z)_-+G_{X_1}^{X_2}(z)_++z\ ,\\
\label{case2d}
U'_+\left(z+t_2/t_3\right)&=W_{(2)}(z)_-+W_{(2)}(z)_++W_{(2)}(-z)\ .
\end{eqnarray}

\noindent Again, the first line is consistent with \eref{curve2} in Proposition \ref{prop-curve} and \eref{identificationG} in Subsection \ref{subsec:voiculescu}. Equations \eref{case2a} and \eref{case2b} dictate the analytic structure and asymptotic behaviour of $G_Y^X(z)$, cf. \ref{ex-ising}. As before, this allows us to compute the spectral curve using \eref{curve2},

\begin{eqnarray}
\label{polynomial1}
F_{(1)}(x,y)&=x^4-2x^3y-\frac{1}{t_3}y^3+\frac{1-t_2}{t_3^2}y^2+x^2y^2-\frac{t_2+2}{t_3}x^2y\nonumber\\
&\quad+\frac{t_3^2-t_2^2}{t_3^2}x^2+\frac{t_2+2}{t_3}xy^2+\frac{t_2^2-t_3^2}{t_3^2}xy+c^{(1)}_{1,1}x+c^{(1)}_{1,0}\ ,\\
F_{(2)}(x,y)&=x^4+\frac{4t_2}{t_3}x^3+\frac{4}{t_3}y^3-x^2y^2-\frac{4+2t_2}{t_3}x^2y-\frac{2 t_2}{t_3}xy^2+\frac{4t_2^2+2t_3^2}{t_3^2}x^2\nonumber\\&\quad+\frac{8t_2}{t_3^2}y^2-\frac{4t_2(2+t_2)}{t_3^2}xy-c^{(2)}_{0,0}x+c^{(2)}_{1,1}y+c^{(2)}_{1,0}\ .
\end{eqnarray}

\noindent According to Corollary \ref{curve3}, the functions $G_{(p)}^Y(z)$ then satisfy

\begin{eqnarray} 
&F_{(1)}\left(z,G_{(1)}^Y(z)\right)&=0\ ,\quad F_{(1)}\left(G_{(1)}^Y(z)-z,G_{(1)}^Y(z)\right)=0\ ,\\
&F_{(2)}\left(z,G_{(2)}^Y(z)\right)&=0\ ,\quad F_{(2)}\left(G_{(0)}^Y(z)-z,G_{(0)}^Y(z)\right)=0\ .
\end{eqnarray} 

\noindent Again we may use the above to compute the analytic structure and asymptotic behaviour of $G_{(p)}^Y(z)$ on all sheets -- see \ref{ex-ising}. Comparing to \eref{case2c} and \eref{case2d}, we conclude that for $p=1$

\begin{eqnarray}
\label{G-2a}
G_{(1)}^Y(z)_+&=z+G_{X_1}^{X_2}(z)\ ,\\G_{(1)}^Y(z)_-&=t_3z^2+t_2z-W_{(1)}(z)\ ,
\end{eqnarray}

\noindent whereas for $p=2$

\begin{eqnarray}
\label{G-2b}
G_{(2)}^Y(z)_+&=z+W_{(2)}(z)\ ,\\G_{(2)}^Y(z)_-&=t_3z^2/4+t_2z/2-W_{(2)}(z)-W_{(2)}(-z)\ .
\end{eqnarray}

\noindent Our results reproduce the analytic structure found in \cite{Carroll96,Carroll97,Atkin11} as well as the relation between the $p=1$ and $p=2$ boundary conditions reported in \cite{Atkin12}: at the level of the polynomial equation, the correspondence with the quantities defined therein is

\begin{equation}
W_Y(z) \leftrightarrow W_A(a)\ ,\quad G_Y^{(1)}(z) \leftrightarrow x(a)\ ,\quad G_Y^{(2)}(z) \leftrightarrow m(a)\ .
\end{equation} 

\noindent The polynomial $E(x,y)=-t_3F_{(1)}(x,x+y)$ is of order 3 in both $x$ and $y$, 

\begin{eqnarray}
E(x,y)&=x^3+y^3-t_3x^2y^2-\frac{1-t_2}{t_3}(x^2y+y^2x)-\frac{1-t_2}{t_3}(x^2+y^2)\nonumber\\
&\quad-\frac{2-2t_2+t_2^2-t_3^2}{t_3}xy-t_3c^{(1)}_{1,1}(x+y)-t_3c^{(1)}_{1,0}\ ,
\end{eqnarray}

\noindent and satisfies $E(x,y)=E(y,x)$ and $E(z,G_{X_1}^{X_2}(z))=0$, as follows from comparison of \eref{case2a} and \eref{case2b}. This is the usual spectral curve of the two-matrix model introduced by Eynard \cite{Eynard02}.

\subsection{$(q,k)=(3,1)$ -- 3-states Potts model}
\label{subsec:q=3}

\begin{figure}[h]
\centering
	\begin{pspicture*}(-1,-.5)(11,4.5)
\psset{unit=.8cm}

\rput(2.8,5){$D_4$}
\psdot[dotsize=.15](2,4)
\psdot[dotsize=.15](3,4)
\psdot[dotsize=.15](4,4.5)
\psdot[dotsize=.15](4,3.5)
\psline(2,4)(3,4)
\psline(3,4)(4,4.5)
\psline(3,4)(4,3.5)

\rput(0,1.5){$A_4$}
\psdot[dotsize=.15](1,3)
\psdot[dotsize=.15](1,2)
\psdot[dotsize=.15](1,1)
\psdot[dotsize=.15](1,0)
\psline(1,3)(1,2)
\psline(1,2)(1,1)
\psline(1,1)(1,0)

\rput(2,3){$\mathbb{I}$}
\rput(3,3){$F$}
\rput(4,3){$\psi,\psi^{\dagger}$}
\rput(2,2){$\varepsilon$}
\rput(3,2){$N$}
\rput(4,2){$\sigma,\sigma^{\dagger}$}
\rput(2,1){$\varepsilon$}
\rput(3,1){$N$}
\rput(4,1){$\sigma,\sigma^{\dagger}$}
\rput(2,0){$\mathbb{I}$}
\rput(3,0){$F$}
\rput(4,0){$\psi,\psi^{\dagger}$}


\psline[linestyle=dashed](0.5,1.5)(5,1.5)

\rput(10,2.5){$
	\begin{tabular}[b]{lll}
		&$A_4\times D_4$&$X_{(p|\sigma)}$\\
		\hline
		$\mathbb{I}$&$(1,1)$&$X_1$\\
		$\psi$&$(1,3)$&$X_2$\\
		$\psi^{\dagger}$&$(1,4)$&$X_3$\\
		$F$&$(1,2)$&$X_1+X_2+X_3$\\
		$\varepsilon$&$(2,1)$&$X_2+X_3$\\
		$\sigma$&$(2,3)$&$X_1+X_3$\\
		$\sigma^{\dagger}$&$(2,4)$&$X_1+X_2$\\
		$N$&$(2,2)$&--\\
		\hline
	\end{tabular}$}
\end{pspicture*}
	\caption{Integrable boundary conditions for the 3-states-Potts model ($q=3$) on a flat lattice are labelled by the nodes of the graph $A_4\times D_4$; the dashed line separates two equivalent choices of a fundamental domain.}
	\label{fig:statespotts}
\end{figure}
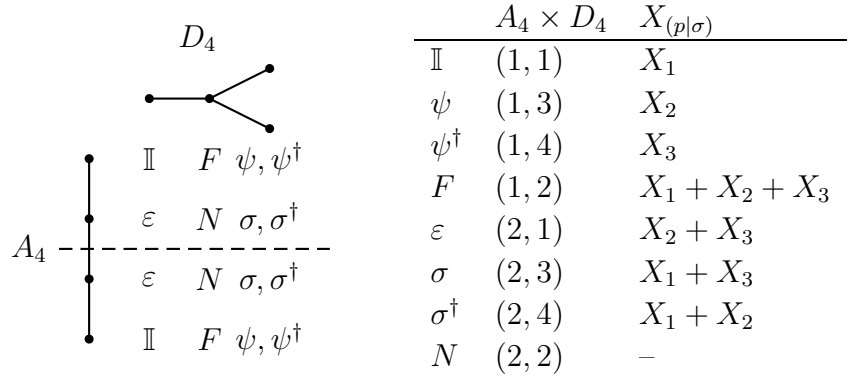

\noindent This model is equivalent to the $D_4$ lattice model on planar triangulations, for which $W_{(1)}(z)$ was first calculated by Daul in \cite{Daul94}. The full list of boundary conditions of the $D_4$ lattice model is given in Figure \ref{fig:statespotts} -- see \cite[p.60]{Behrend00}: $W_{(1)}(z)$ captures the $S_3$-triplet $\{\mathbb{I},\psi,\psi^{\dagger}\}$, $W_{(2)}(z)$ the $S_3$-triplet $\{\varepsilon,\sigma,\sigma^{\dagger}\}$, and $W_{(3)}(z)$ the singlet $\{F\}$; thanks to Corollary \ref{curve3}, the spectral curve for the latter also defines another singlet $W_{(0)}(z)$, which may be conjectured to describe the one remaining independent boundary condition $\{N\}$, though herein we will not attempt to determine its relationship to the microscopic definition given in \cite{Behrend00}. From \eref{Z1}, we see that the partition function can be written as

\begin{eqnarray}
\label{Z-potts-3}
Z_{N,3}&=\int\mathrm{d}X_1\mathrm{d}X_2\mathrm{d}X_3\ \E^{-N\mathrm{tr}[U(X_1)+U(X_2)+U(X_3)]}\E^{N\mathrm{tr}(X_1+X_2+X_2)^2/2}\\
	&=\int\mathrm{d}Y\ \E^{-N\mathrm{tr}Y^2/2}\left(\int\mathrm{d}X\ \E^{-N\mathrm{tr}[U(X)-XY]}\right)^3\ .
\end{eqnarray}

\noindent Again we may set $X_\pm=X_1\pm X_2+t_2/t_3$ and integrate out $X_-$, which gives

\begin{eqnarray}
\label{lightcone}
Z_{N,3}
&=\mathrm{const.}\times\int\frac{\mathrm{d}X_+\mathrm{d}X_3\ \E^{-N\mathrm{tr}[U_+(X_+)+U(X_3)-X_+X_3]}}{\sqrt{\mathrm{Det}(X_+\otimes\mathbb{I}+\mathbb{I}\otimes X_+)}}\ ,
\end{eqnarray}

\noindent where $U_+(z)$ is as in the previous section. Using the definition \eref{def-g} in the planar limit, the above expressions respectively imply the following set of equations:

\begin{eqnarray}
\label{case3a}
z&=W_Y(z)_--2W_Y(z)_++3G_Y^X(z)_+\ ,\\
\label{case3b}
U'(z)&=W_{(1)}(z)_-+G_X^Y(z)_+\ ,\\
\label{case3c}
U'(z)&=W_{(1)}(z)_-+G_{X_3}^{X_1+X_2}(z)_++z\ ,\\
\label{case3d}
U'_+\left(z+t_2/t_3\right)&=W_{(2)}(z)_-+G_{X_1+X_2}^{X_3}(z)_++W_{(2)}(-z)\ .
\end{eqnarray}

\noindent Once again, the first line is consistent with \eref{curve2} in Proposition \ref{prop-curve} and \eref{identificationG} in Subsection \ref{subsec:voiculescu}. The analytic structure and asymptotic behaviour of all relevant functions can be determined as before - cf. \ref{ex-potts3}. The resulting spectral curves are

\begin{eqnarray}
\label{polynomial2}
F_{(1)}(x,y)&=x^6-x^5 \left(6 y+\frac{6 t_2}{t_3}\right)+x^4 \left(13 y^2+\frac{6 (4 t_2-1)}{t_3}y+\frac{9 t_2^2}{t_3^2}+2\right)\nonumber\\
&-x^3 \left(12 y^3-\frac{24-28 t_2}{t_3}y^2+\frac{12 t_2^2+8 t_3^2-24t_2}{t_3^2}y-\frac{4t_2^3-8 t_2 t_3^2}{t_3^3}\right)\nonumber\\
&+x^2 \left(4 y^4+\frac{6 (t_2-6)}{t_3}y^3+\frac{9-54 t_2-15 t_2^2+10 t_3^2}{t_3^2}y^2\right.\nonumber\\
&\qquad\left. -\frac{6\left(3 t_2^2+5 t_2^3+t_3^2-3 t_2 t_3^2\right)}{t_3^3}y-\frac{12 t_2^4}{t_3^4}+\frac{6 t_2^2}{t_3^2}+1\right)\nonumber\\
&+x \left(\frac{4
\left(3+t_2\right)}{t_3}y^4+\frac{2\left(12 t_2+9 t_2^2-2 t_3^2-9\right)}{t_3^2}y^3\right. \nonumber\\
&\qquad\left.+\frac{2\left(13 t_2^3+6 t_3^2-t_2 \left(9+4 t_3^2\right)\right)}{t_3^3}y^2\right. \nonumber\\
&\qquad\left.-\frac{2\left(6 t_2^3-6
t_2^4-6 t_2 t_3^2+t_3^4\right)}{t_3^4}y+\frac{4 t_2^3-2 t_2 t_3^2}{t_3^3}\right)\nonumber\\
&-\frac{4 }{t_3}y^5+\frac{17-18
t_2}{t_3^2}y^4+c^{(1)}_{3,3}y^3
+c^{(1)}_{3,2}y^2+c^{(1)}_{3,1}y+c^{(1)}_{3,0}\ ,
\end{eqnarray}

\begin{eqnarray}
\label{polynomial3}
F_{(3)}(x,y)&= x^6+x^5 \left(6 y+\frac{18 t_2}{t_3}\right)+x^4 \left(9 y^2+\frac{18\left(4 t_2-1\right)}{t_3}y+\frac{117 t_2^2}{t_3^2}+6\right)\nonumber\\
&-x^3 \left(4 y^3-\frac{36\left(t_2-2\right)}{t_3}y^2-\frac{12
\left(21 t_2^2+2 t_3^2-18t_2\right)}{t_3^2}y-\frac{36 \left(9 t_2^3+2 t_2 t_3^2\right)}{t_3^3}\right)\nonumber\\
&-x^2 \left(12 y^4+\frac{18 \left(3+5 t_2\right)}{t_3}y^3-\frac{9\left(9-54 t_2-15 t_2^2+2 t_3^2\right)}{t_3^2}y^2\right.\nonumber\\
&\qquad\left.-\frac{54\left(-15 t_2^2+3 t_2^3-t_3^2+3
t_2 t_3^2\right)}{t_3^3}y-\frac{270
t_2^2}{t_3^2}-\frac{324 t_2^4}{t_3^4}-9\right)\nonumber\\
&-x \left(\frac{36 \left(t_2-1\right)}{t_3}y^4+\frac{6\left(39
t_2^2+2 t_3^2-27\right)}{t_3^2}y^3\right.\nonumber\\
&\qquad\left.+\frac{54\left(12 t_2^2+9 t_2^3+2 t_3^2-9 t_2\right)}{t_3^3}y^2-\frac{54 t_2 \left(6 t_2^2+t_3^2\right)}{t_3^3}\right.\nonumber\\
&\qquad\left. -\frac{18\left(12
t_2^2 t_3^2+t_3^4-54 t_2^3-18 t_2^4-18 t_2 t_3^2\right)}{t_3^4}y\right)\nonumber\\
&+\frac{108}{t_3} y^5+\frac{27  \left(26 t_2-9\right)}{t_3^2}y^4-\frac{9 \left(12-84
t_2^2+16 t_3^2+8 t_2 \left(t_3^2-9\right)\right)}{t_3^3}y^3 \nonumber\\&+\frac{3
\left(216 t_2^3+36 t_3^2-144 t_2 t_3^2+t_2^2 \left(216-84 t_3^2\right)+8t_3^4\right)}{t_3^4}y^2\nonumber\\&-\frac{9\left(48 t_2^2+24 t_2^3-8 t_2 t_3^2+4t_3^2\right)}{t_3^3}y+\frac{72
t_2^2}{t_3^2}+4 \nonumber\\&-27 \left(c^{(1)}_{3,3}y^3+c^{(1)}_{3,2}y^2+c^{(1)}_{3,1}y+c^{(1)}_{3,0}\right)\ .
\end{eqnarray}

\noindent According to Corollary \ref{curve3}, the functions $G_{(p)}^Y(z)$ then satisfy

\begin{eqnarray} 
&F_{(1)}\left(z,G_{(1)}^Y(z)\right)&=0\ , \quad F_{(1)}\left(G_{(2)}^Y(z)-z,G_{(2)}^Y(z)\right)=0\ ,\\
&F_{(3)}\left(z,G_{(3)}^Y(z)\right)&=0\ ,\quad F_{(3)}\left(G_{(0)}^Y(z)-z,G_{(0)}^Y(z)\right)=0\ .
\end{eqnarray} 

\noindent As before, the above fixes the analytic structure and asymptotic behaviour of $G_{(p)}^Y(z)$ on all sheets -- see \ref{ex-potts3}. Comparing to \eref{case3c} and \eref{case3d}, we conclude that 

\begin{eqnarray}
\label{G-3a}
G_{(1)}^Y(z)_+&=z+G^{X_1+X_2}_{X_3}(z)\ ,\\ G_{(1)}^Y(z)_-&=t_3z^2+t_2z-W_{(1)}(z)\ ,
\end{eqnarray}

\noindent and

\begin{eqnarray}
\label{G-3b}
G_{(2)}^Y(z)_+&=z+G^{X_3}_{X_1+X_2}(z)\ ,\\ G_{(2)}^Y(z)_-&=t_3z^2/4+t_2z/2-W_{(2)}(z)-W_{(2)}(-z)\ .
\end{eqnarray}

\noindent Similarly, one can show 

\begin{eqnarray}
\label{G-3c}
G_{(3)}^Y(z)_+&=z+W_{(3)}(z)\ ,\\ G_{(3)}^Y(z)_-&=t_3z^2/9+t_2z/3+\mathcal{O}(z^{-1})\ .
\end{eqnarray}

\noindent $F_{(1)}(x,y)$ corresponds to the spectral curve first described in \cite{Daul94,ZJ99}; the remaining expressions are new results. The polynomials $-t_3F_{(p)}(x,x+y)=4E_{(p)}(x,y)$ are of degree one less in $x$. For example,

\begin{eqnarray}
\label{e1}
E_{(1)}(x,y)&=x^5+y^5-x^4 \left(\frac{t_3}{4} y^2+\frac{2 t_2 -20}{4 }y+\frac{8-24 t_2}{4 t_3}\right)-\frac{17-18 t_2}{4 t_3}y^4\nonumber\\&\quad-x^3 \left(t_3 y^3+\frac{14 t_2 -34}{4 t_3}y^2+\frac{12 t_2^2-84 t_2+32}{4 t_3}y-\frac{t_3\widetilde{c}_1}{4}\right)\nonumber\\&\quad-x^2 \left(t_3 y^4+\frac{11\left(t_2 -1\right)}{2}y^3+\frac{39 t_2^2-90 t_2-2 t_3^2+57}{4 t_3}y^2-\frac{t_3\widetilde{c}_2}{4 }y+\frac{\widetilde{c}_3}{4}\right)\nonumber\\&\quad+x \left(\frac{8-4 t_2 t_3}{4}y^4-\frac{9t_2^2-24 t_2-2 t_3^2+25}{2 t_3}y^3+\frac{ t_3\left(3  c^{(1)}_{2,2}- c^{(1)}_{3,3}\right)}{4}y^2\right)\nonumber\\&\quad-\frac{t_3 \left(2 c^{(1)}_{3,2}- c^{(1)}_{2,1}\right)}{4}xy-\frac{1}{4} t_3 y^3 c^{(1)}_{3,3}-\frac{1}{4} t_3 y^2 c^{(1)}_{3,2}-\frac{t_3(c^{(1)}_{3,1}-c^{(1)}_{2,0})}{4}x\nonumber\\&\quad-\frac{1}{4} t_3 y c^{(1)}_{3,1}-\frac{1}{4} t_3 c^{(1)}_{3,0}\ ,
\end{eqnarray}

\noindent where 

\begin{eqnarray}
\widetilde{c}_1&=c^{(1)}_{2,2}-c^{(1)}_{1,1}-c^{(1)}_{3,3}+c^{(1)}_{0,0}\ ,\\
\widetilde{c}_2&=c^{(1)}_{2,2}-2c^{(1)}_{1,1}-3c^{(1)}_{3,3}\ ,\\
\widetilde{c}_3&=c^{(1)}_{1,0}-c^{(1)}_{2,1}+c^{(1)}_{3,2}\ .
\end{eqnarray}

\noindent This expression satisfies $E_{(2)}(x,y)=E_{(1)}(y,x)$ and is equivalent to the polynomial $Q(x_3,x_+)$ reported previously by the authors in \cite{ksm13}. Upon inspection of \eref{case3b}, \eref{case3c} and \eref{case3d} we conclude that 

\begin{equation}
0=E_{(2)}\left(z,G_{X_1+X_2}^{X_3}(z)\right)=E_{(2)}\left(G_{X_3}^{X_1+X_2}(z),z\right)\ .
\end{equation}

\section{Critical behaviour}
\label{sec:crit}

\noindent This section discusses the critical behaviour of $W_{(p)}(z)$ for $0< q< 4$. The existence of a second-order phase transition for the Potts model in this regime has been demonstrated on a flat lattice by Baxter \cite{Baxter71,Baxter73}; here we describe their random-lattice counterparts\footnote{When $q>4$, these critical points do not exist, though presumably another critical point emerges as for the $O(n)$ model on planar triangulations, for which $\gamma_s=1/2$ when $n>2$ \cite{Eynard95,Durhuus96}.}. According to Proposition \ref{prop-curve}, it suffices to determine the critical behaviour of $W_Y(z)$ for $1>\nu>0$. Its possible critical exponents are determined by the multiplicity of the singularity at the left edge $z_-$ of the spectral density $\rho_Y(z)$, which controls the large-order behaviour of the generating function $W_Y(z)$. 

Let us consider the case of triangulations covered in Proposition \ref{prop-ycurve}. Then $z_-=\delta_U$ when both $t_2$ and $t_3$ are at their critical values $t_{2,c}$, $t_{3,c}$, with $t_{m>3}=0$. When $\nu$ is rational, exact expressions for the critical lines and points can be obtained easily by requiring sufficiently many derivatives of the polynomial $F_{(p)}(x,y)$ 
to vanish; the result is depicted for the example $(q,k)=(3,1)$ in Figure \ref{fig:phasediagram}. For example, from \eref{polynomial0}, \eref{polynomial1} and \eref{polynomial2} respectively, we find

\begin{figure}
\label{fig:phasediagram}
\centering
\includegraphics[width=.5\textwidth]{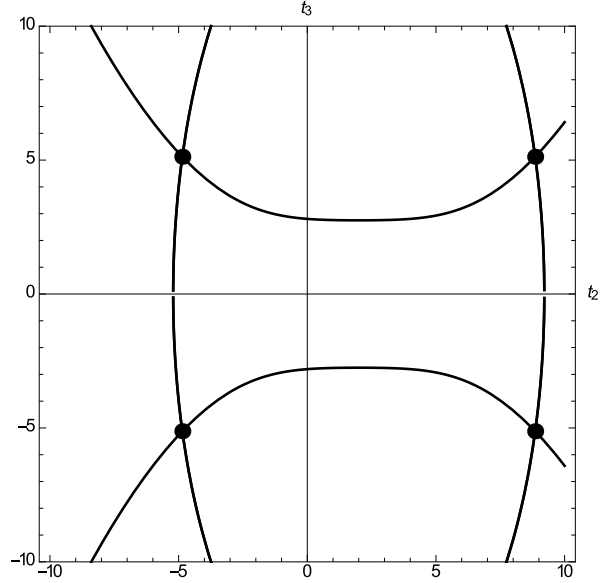}
\caption{A portion of the phase diagram of the 3-states Potts model on planar triangulations. Along the critical lines, $\partial_y^4 E_{(1)}$, $\partial_x\partial_y^3 E_{(1)}$ and $\partial_x^3\partial_y E_{(1)}$ vanish, with the polynomial $E_{(1)}$ as in \Eref{e1}; at the critical points, $\partial_x^4 E_{(1)}$ vanishes in addition.}
\end{figure}

\begin{equation}
(t_{2,c},t_{3,c})=\left\{\begin{array}{ll}
\left(1\pm2\sqrt{3},\pm\sqrt{2}\right) \ ,\quad& q=1\ ,\\
\left(2\pm2\sqrt{7},\pm\sqrt{10}\right)\ ,\quad& q=2\ ,\\
\left(3\pm\sqrt{47},\pm\sqrt{105}/2\right)\ ,\quad& q=3\ .\end{array}\right. 
\end{equation}

 \noindent Let us parametrise the vicinity of this point by eliminating $\delta_U$ in favour of the scaling parameter $\varepsilon=z_--\delta_U$ and investigate the limit $\varepsilon\to 0$. We would like to expand $W_Y(z)$ in powers of $\varepsilon$, keeping $(z-z_-)/\varepsilon$ finite. Setting again $w(z)=\sqrt{z-\delta_U}$, this requires the expansion of $f(w)$ in \eref{lambdares} in powers of $\sqrt{\varepsilon}\equiv w_-$, keeping $w/\sqrt{\varepsilon}$ finite. As we show in \ref{app:auxprob}, 
equation
\eref{f-scaling}, the terms of $\mathcal{O}(\varepsilon^{n\pm\nu/2})$ in the expansion of $f_{\mathrm{s.}}(w)$ can be written as

\begin{equation}
	\label{Gscaling}
	\varepsilon^{n\pm\nu/2}\left
(t_n^{(\pm)}T_{2n\pm\nu}(-w/\sqrt{\varepsilon})+u_n^{(\pm)}U_{2n\pm\nu}(-w/\sqrt{\varepsilon})\right)\ ,
\end{equation}

\noindent where $T_\nu(w)$ (resp. $U_\nu(w)$) is the Chebyshev function of the first (resp. second) kind as defined in \eref{chebyshevfct}. Using \eref{Chebyshevdiscont} to compare the term of same order in the expansion of the discontinuity $f(w)_+-f(w)_-$ across $w/\sqrt{\varepsilon}\in[1,\infty)$ to \eref{lambdadens} and requiring that $\rho_Y(z)\to 0$ as $z\to z_-$ reveals that $u_n^{(\pm)}=0$ for all $n$. Using the relationship $(z-z_-)/\varepsilon= w^2/\varepsilon-1$ and \eref{lambdares} together with

\begin{equation}
T_{2-\nu}\left(\sqrt{1-\eta}\right)+T_{2-\nu}\left(-\sqrt{1-\eta}\right)=2\cos\left(\frac{\pi\nu}{2}\right)T_{2-\nu}\left(\sqrt{\eta}\right)\ ,
\end{equation}

\noindent gives the following expansion of $W_Y(z)$:

\begin{eqnarray}
W_Y(z_--\varepsilon \eta)&=W_Y(z_-)+C\varepsilon^{1-\nu/2}T_{2-\nu}\left(\sqrt{\eta}\right)-\varepsilon\frac{2\eta}{4-q}+\mathcal{O}(\varepsilon^{1+\nu/2})\ ,
\end{eqnarray}

\noindent where $W_{Y}(z_-)=(2z_-+qt_{2,c}/t_{3,c})/(4-q)$ and $C$ is a normalisation constant. The expansion of $G_Y^{(p)}(z)$ now follows immediately from Proposition \ref{prop-curve}; the leading non-analytic term reads

 \begin{equation}
 C\varepsilon^{1-\nu/2}\left[T_{2-\nu}\left(\sqrt{\eta}\right)-\frac{2p}{q}\cos\left(\frac{\pi\nu}{2}\right)T_{2-\nu}\left(\sqrt{1-\eta}\right)\right]\ .
 \end{equation}
 
\noindent The string exponent $\gamma_s$ predicted by $W_{(p)}(z-z_c)\sim (z-z_c)^{1-\gamma_s}$ is in agreement with previous findings \cite{Daul94,ZJ99,Eynard99}, namely

\begin{equation}
	\gamma_s=\frac{\nu}{\nu-2}\ .
\end{equation}
 
\noindent In particular, $\gamma_s=-1/2$, $-1/3$, $-1/5$ and $0$ for $q=1$, $2$, $3$ and $4$ respectively, which is consistent with Liouville theory interacting with conformal matter of central charges $c_M=0$, $1/2$, $4/5$ and $1$. Whilst in the first two cases the conformal field theory is unique, there exist two distinct modular invariants at $c_M=4/5$, corresponding to the $(A_4,A_5)$ Virasoro minimal model and the $(A_4,D_4)$ minimal model, which admits a conserved spin-3 current is diagonal under the extended $\mathcal{W}_3$-algebra \cite{Cappelli86,Caldeira03}. In light of the $S_3$-symmetry of the partition function $Z_{N,3}$
and the resulting spectrum of boundary conditions \cite{Affleck98} -- cf. Figure \ref{fig:statespotts} -- we expect our equations to describe the latter coupled to Liouville theory, not the former. 

\section{Discussion}
\label{sec:pottsdiscussion}

\noindent Let us summarise our results. Starting from the matrix integral representation of the Potts model on a random lattice in Lemma \ref{lem-Z}, we employed the saddle point approximation to express $W_{(p)}(z)$ via the $p$-independent average $W_Y(z)$ in Proposition \ref{prop-curve}. For the case of planar triangulations, Proposition \ref{prop-ycurve} provides an explicit elliptic parametrisation of the latter for arbitrary $q\neq 4$. Just as \eref{Z1} defines an analytic continuation of the partition function to the complex $q$-plane, \eref{curve2} may 
be used to define the analytic continuation of $W_{(p)}(z)$ in the complex $p$- and $q$-plane.  Furthermore, Corollary \ref{curve3} showed that $W_{(p)}(z)$ and $W_{(q-p)}(z)$ can be related algebraically -- in the case studies in Section \ref{sec:cases}, this resulted in the partition functions with $p$ and $q-p$ colours on the boundary being described by a single spectral curve defined by the zero locus of \eref{pol}. Remarkably, equations \eref{G-2a}, \eref{G-3a} and \eref{G-3b} indicate that $G_{(p)}^Y(z)-z$ and $G_{(q-p)}^Y(z)-z$ are functional inverses, generalising the well-known duality\footnote{Note that this involution is in general distinct from the Kramers-Wannier duality \cite{Kramers41} on the dynamical lattice: e.g. for $q=3$, the latter interchanges $p=1$ with $p=3$, and $p=2$ with $p=0$, mixing singlets and triplets \cite{Affleck98}.} interchanging the two matrices of the $\mathbb{Z}_2$-invariant hermitian two-matrix model \cite{BEH01a,BEH01b}. 

Our results naturally pave the way for a number of further developments: Firstly, going beyond the planar limit, as was done in \cite{eynard11b} for the $O(n)$ model on random lattices, it would be interesting to explore if and when the curves  $\mathcal{C}_{(p)}$ defined by \eref{pol} can be used as a valid part of the initial data of the topological recursion algorithm \cite{Eynard07}, which enables the computation of averages to all orders in $1/N$. Secondly, for general values of $h$ in Lemma \ref{lem-Z}, the remarkably simple result in Lemma \ref{lem-loop} should enable us to investigate the boundary renormalisation group flow relating boundary conditions with different $p$. This flow is expected to induce a partial order on the spectrum of boundary states in accordance with the boundary analogue of the $c$-theorem \cite{Zamolodchikov86}, as conjectured in \cite{Affleck91} and finally proven by Friedan and Konechny \cite{Friedan03}; it would be interesting to derive this fact directly from the matrix model, thus extending the work of \cite{Carroll96,Carroll97,Atkin12}.

Finally, it would be instructive to check if the universal results of Section \ref{sec:crit} can be reproduced by other means, e.g. by explicitly constructing the corresponding conformal field theory. Remarkably, as exemplified by the case of the $D_4$ model, this appears to require a non-diagonal partition function in the Liouville sector in general -- see also \cite{Wilshin08}. From this perspective, various other corners of the $(q,k)$-parameter space also warrant more detailed investigations. Of particular interest would be the computation of the scaling behaviour for strongly coupled models with $q>4$: given the close relationship between the Potts model and loop models, one might wonder if there exist analogues of the critical points of the $O(n)$ model on a random lattice with $n>2$ reported in \cite{Eynard95,Durhuus96}. 

\ack 

MA was supported by the European Research Council under the European
Union's Seventh Framework Programme (FP/2007/2013)/ ERC Grant Agreement
n. 307074, by the Belgian Interuniversity Attraction Pole P07/18, and by F.R.S.-F.N.R.S. BN thanks the mathematical physics group at Universit\'e catholique de Louvain for the kind hospitality during an intermediate stage of this project. The work of BN is supported by the German National Academic Foundation and STFC grant ST/J500641/1. JFW acknowledges the support of STFC grant ST/L000474/1.

\appendix
\section{Auxiliary Saddle Point Problem}
\label{app:auxprob}

\noindent Given $|\nu|\leq 1$, consider a function $f(w)$ holomorphic on $\mathbb{C}\setminus[\alpha,\beta]$ for some connected $[\alpha,\beta]\subset\mathbb{R}$, satisfying

\begin{equation}
	\label{linear}
	\mathrm{Re}\ f(w)=\cos (\pi \nu)f(-w)\ ,\quad w\in[\alpha,\beta]\ .
\end{equation} 

\noindent The general solution to this equation was first described in \cite{Eynard95} and we will derive it below; thereafter we investigate the limit $\alpha/\beta\to 0$.
\\

\subsection{General solution} We begin by showing that any function satisfying \eref{linear} is uniquely specified by the behaviour at its singularities. To this end, it is useful to introduce a new coordinate $\sigma$ by

\begin{eqnarray}
	\label{defnphi}
		\mathbb{C}\setminus[\pm\alpha,\pm\beta]&\longrightarrow(0,1)\times[0,\tau)\subset\mathbb{C}\;,\nonumber\\
		w&\longmapsto\sigma(w)=\frac{1}{2K}\int_{1}^{w/\alpha}\mathrm{d}t\left((1-t^2)(1-(\alpha t/\beta)^2)\right)^{-1/2}\ .
	\end{eqnarray}

\noindent By definition of the Jacobi elliptic function\footnote{Our conventions for elliptic functions are those of Gradshtein and Ryzhik \cite{Gradshteyn07}.} $\mathrm{sn}(u|k)$ of elliptic modulus $k$, the inverse map is

\begin{eqnarray}
	\label{ellipticparam}
		(0,1)\times[0,\tau)&\longrightarrow\mathbb{C}\setminus[\pm\alpha,\pm\beta]\;,\nonumber\\
		\sigma&\longmapsto w(\sigma)
		=\alpha\ \mathrm{sn}\left(2K\sigma+K\right|\alpha/\beta)\ .
	\end{eqnarray}

\noindent Here, $K$ and $K'$ are given by the complete elliptic integrals of the first and second kind, respectively:

\begin{eqnarray} 
\label{ellipticK}
		K&=\int_0^1\mathrm{d}t\left((1-t^2)(1-(\alpha t/\beta)^2)\right)^{-1/2}\;,\\
		K'&=\int_0^{\infty}\mathrm{d}t\left((1+t^2)(1+(\alpha t/\beta)^2)\right)^{-1/2}\ .
	\end{eqnarray}

\noindent This change of variables correpsonds to parametrising the two-cut complex $w$-plane on the torus $\mathbb{C}/(\mathbb{Z}+\tau\mathbb{Z})$ with modular parameter 

\begin{equation}
	\label{tau}
	\tau=\I \frac{K'}{K}\;.
\end{equation}

\noindent The coordinate $w$ is invariant under $\sigma\to-\sigma$ and (anti-) periodic along the respective cycles of the torus:

\begin{equation}
	w(\sigma+m+n\tau)=(-1)^m w(\sigma)\;,\qquad(m,n)\in\mathbb{Z}^2\;.
\end{equation}

\noindent We also require the Jacobi theta functions

\begin{eqnarray}
		\vartheta_1(u|\tau)&=\frac{1}{\I}\sum_{n\in\mathbb{Z}}(-1)^n \E^{\I\pi\tau(n+1/2)^2}\E^{\I u(2n+1)}\ ,\\
		\vartheta_2(u|\tau)&=\sum_{n\in\mathbb{Z}} \E^{\I\pi\tau(n+1/2)^2}\E^{\I u(2n+1)}\ ,\\
		\vartheta_3(u|\tau)&=\sum_{n\in\mathbb{Z}}\E^{\I\pi\tau n^2}\E^{2\I u n}\ .
	\end{eqnarray}

\noindent In particular, $\vartheta_1$ is an entire function with a unique simple zero at $u=0\ \mathrm{mod}\ \mathbb{Z}\oplus\pi\tau\mathbb{Z}$, satisfying

\begin{eqnarray}
\label{theta1}
		\vartheta_1\left(u+\pi(m+n\tau)|\tau\right)&=(-1)^{mn} \E^{-\I n(\pi\tau+2u)}\vartheta_1(u|\tau)\;,\quad(m,n)\in\mathbb{Z}^2\;,\\
		\vartheta_1(-u|\tau)&=-\vartheta_1(u|\tau)\ ,\\
\vartheta_1(u/\tau|-1/\tau)&=-\sqrt{\I\tau}\E^{\I\pi u^2/\tau}\vartheta_1(u|\tau)\ .
	\end{eqnarray}

\noindent We also note the equivalent representation of $w(\sigma)$ in terms of $\vartheta_i$,

\begin{equation}
\label{theta23}
	w(\sigma)
	=\sqrt{\alpha\beta}\frac{\vartheta_2(\pi\sigma|\tau)}{\vartheta_3(\pi\sigma|\tau)}\ .
\end{equation}

\noindent Analytic continuation of $f(w(\sigma))$ requires boundary conditions on the rectangle $(0,1)\times[0,\tau)$:

\begin{enumerate}
	\item Analyticity across $[0,\alpha]\cup[\beta,\infty]$ allows us to continue $f(w(\sigma))$ to the infinite strip $(0,1)\times \I\mathbb{R}$ by
	\begin{equation}
		\label{period}
		f(w(\tau+\sigma))=f(w(\sigma))\ , \quad \sigma\in(0,1)\times[0,\tau)\ .
	\end{equation}
	\item Analyticity across $[-\beta,-\alpha]$ allows us to extend this definition to $(0,2)\times \I\mathbb{R}$ using
	\begin{equation} 
		\label{refl}
		f(w(1+\sigma))=f(w(1-\sigma))\ , \quad \sigma\in(0,1)\times \I\mathbb{R}\ .		
	\end{equation}
	
	\item Finally, using all the above, \eref{linear} implies 
	\begin{equation}
		\label{spe-ell}
		f(w(1+\sigma))=\frac{f(w(\sigma))+f(w(2+\sigma))}{2\cos(\pi\nu)}\ , \quad\sigma\in(0,2)\times \I\mathbb{R}\ .
	\end{equation}
\end{enumerate}

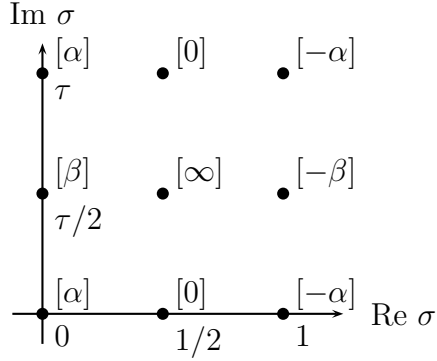
\begin{figure}[t]
	\centering
	\begin{pspicture*}(1,1)(7,6)
\psset{unit=1.6cm}

\psdot[dotsize=.1](1,1)
\rput[tl](1.1,0.9){$0$}
\rput[tl](1.1,1.3){$[\alpha]$}
\psdot[dotsize=.1](1,2)
\rput[tl](1.1,1.9){$\tau/2$}
\rput[tl](1.1,2.3){$[\beta]$}
\psdot[dotsize=.1](1,3)
\rput[tl](1.1,2.9){$\tau$}
\rput[tl](1.1,3.3){$[\alpha]$}

\psdot[dotsize=.1](2,1)
\rput[tl](2.1,0.9){$1/2$}
\rput[tl](2.1,1.3){$[0]$}
\psdot[dotsize=.1](2,2)
\rput[tl](2.1,2.3){$[\infty]$}
\psdot[dotsize=.1](2,3)
\rput[tl](2.1,3.3){$[0]$}

\psdot[dotsize=.1](3,1)
\rput[tl](3.1,1.3){$[-\alpha]$}
\rput[tl](3.1,0.9){$1$}
\psdot[dotsize=.1](3,2)
\rput[tl](3.1,2.3){$[-\beta]$}
\psdot[dotsize=.1](3,3)
\rput[tl](3.1,3.3){$[-\alpha]$}

\psline{->}(0.75,1)(3.5,1)
\rput(4,1){$\mathrm{Re}\ \sigma$}
\psline{->}(1,0.75)(1,3.25)
\rput(1,3.5){$\mathrm{Im}\ \sigma$}
\end{pspicture*}
	\caption{A fundamental domain for $w\in\mathbb{C}\setminus[\pm\alpha,\pm\beta]$ is given by $\sigma\in(0,1)\times[0,\tau)$. The images $w(\sigma)$ of special points $\sigma$ are indicated in square brackets.}
	\label{fig:sheets}
\end{figure}

\noindent Solving the latter condition allows us to continue $f(w(\sigma))$ to a meromorphic function on the entire complex $\sigma$-plane, on which it satisfies two (quasi-)periodicity conditions:

\begin{eqnarray}
	\label{periodicity1}
	0&=(\E^{-\partial_{\sigma}}-\E^{\I\pi\nu})(\E^{-\partial_{\sigma}}-\E^{-\I\pi\nu})f(w(\sigma))\ ,\\
\label{periodicity2}
	0&=(\E^{-\tau\partial_{\sigma}}-1)f(w(\sigma))\ .
\end{eqnarray}

\noindent We find it convenient to follow \cite{Borot12} in introducing the unique function in $\mathrm{Ker}(\E^{-\partial_\sigma}-\E^{\I\pi\nu})$ with a simple pole of unit residue at $\sigma=0$ and no other singularities mod $\mathbb{Z}\oplus\tau\mathbb{Z}$ as

\begin{equation}
	\label{g(phi;nu)}
	g(\sigma;\nu)=\frac{\vartheta_1'(0|\tau)}{\vartheta_1( \pi\nu\tau/2|\tau)}\frac{\vartheta_1(\pi\sigma+\pi\nu\tau/2|\tau)}{\vartheta_1(\pi\sigma|\tau)}\E^{\I\pi\nu\sigma}\ ,
\end{equation}

\noindent which has a simple zero at $\sigma=-\nu\tau/2$; any solution to \eref{periodicity1} and \eref{periodicity2} may be expressed as a linear combination of derivatives $g(\sigma;\pm\nu)$ with shifted argument. The reflection relation \eref{refl} fixes the relative coefficient, so that the general solution to \eref{linear} can be expressed as

\begin{equation}
f(w)=\sum_{n\geq 0}\frac{a_n}{n!}\frac{\partial^n}{\partial\sigma_0^n}\left(\E^{-\I \pi\nu/2}g(\sigma(w)-\sigma_0;\nu)-\E^{\I\pi\nu/2}g(\sigma(w)-\sigma_0;-\nu)\right)\ ,
\end{equation}

\noindent where the requirement that $f(w)$ be free of singularities on $\mathbb{C}\setminus[\alpha,\beta]$ demands $\sigma_0=(\tau+1)/2$, and the coefficients $a_n$ are to be determined by boundary conditions supplementing the problem, using

 \begin{equation}
 \lim_{z\to\infty}(\sigma-\sigma_0)^{n+1}\frac{1}{n!}\frac{\partial^n}{\partial \sigma_0^n}g(\sigma-\sigma_0;\nu)=1\ .
 \end{equation}
 
  \noindent In particular, if $f(w)$ has a pole of order $m$ at $w=\infty$, then $a_n=0$ for $n>m$.
\\

\subsection{The limit $\alpha/\beta \searrow 0$}

 In Section \ref{sec:crit} we will be interested in the limit $\alpha/\beta\searrow 0$, in which $\tau\to \I \infty$, and thus

\begin{equation}
	\label{scalinglimit}
	\lim_{\tau\to \I \infty}w(\sigma)/\alpha=\cos(\pi\sigma)\;,\qquad \lim_{\tau\to \I \infty}g(\sigma;\nu)=\frac{\pi\ \E^{\I\pi(\nu-1)\sigma}}{\sin(\pi\nu)}\;.
\end{equation}

\noindent In this limit, $f(w)$ is holomorphic on $w/\alpha\in\mathbb{C}\setminus[1,\infty)$, and \eref{linear} becomes

 \begin{equation}
 (\E^{-\partial_{\sigma}}-\E^{\I\pi\nu})(\E^{-\partial_{\sigma}}-\E^{-\I\pi\nu})f\left(\alpha\cos(\pi \sigma)\right)=0\ .
 \end{equation}
 
 \noindent A convenient basis for the solution space is given by the Chebyshev functions. These are represented on the unit disk as

\begin{eqnarray}
		\label{chebyshevfct}
		T_{\nu}(x)&=\cos(\pi\nu\phi)\ ,\\
		U_{\nu}(x)&=\frac{\sin(\pi(\nu+1)\phi)}{\sin(\pi\phi)}\ ,\quad\ x=\cos(\pi\phi)\ .
	\end{eqnarray}

\noindent From the above definition it is easy to verify that both $T_{\nu}(x)$ and $U_{\nu}(x)$ satisfy \eref{linear}, and $T_{1/\nu}(x)$ is the functional inverse of $T_{\nu}(x)$. For non-integer $\nu$, these functions have a branch cut on $x\in [-1,-\infty)$, with discontinuity

\begin{eqnarray}
		\label{Chebyshevdiscont}
		T_{\nu}(x)_+-T_{\nu}(x)_-&=-2\I\sin(\pi\nu)\sqrt{1-x^2}\ U_{\nu-1}(-x)\ ,\\
		U_{\nu}(x)_+-U_{\nu}(x)_-&=-2\I\frac{\sin(\pi\nu)}{\sqrt{1-x^2}}T_{\nu+1}(-x)\ .
	\end{eqnarray}

\noindent When $\nu\in\mathbb{N}$, the right-hand side vanishes and we recover the definition of the Chebyshev polynomials of the first and second kind. As a result when $\nu=p/q$ is rational, $y=T_{\nu}(x)$ is the solution to the polynomial equation $T_q(y)-T_p(x)=0$. Since \eref{linear} restricts the scaling exponents $f(w)\sim(-w)^\kappa$ to the form $\kappa=2n\pm\nu$, $n \in\mathbb{Z}$ we can expand $f(w)$ as

\begin{equation}
\label{f-scaling}
	f(w)=\sum_{n\geq0}\sum_{\pm}\alpha^{2n\pm\nu}\left(t_n^{(\pm)}T_{2n\pm\nu}(-w/\alpha)+u_n^{(\pm)}U_{2n\pm\nu}(-w/\alpha)\right)\ ,
\end{equation}

\noindent with constants $t_n^{(\pm)}$, $u_n^{(\pm)}$ to be determined by boundary conditions. 

\section{Analytic Structure and Asymptotics}
\label{app:asympt}

\noindent We illustrate the analytic structure of $G_Y^{(p)}(z)$ and $G_{(p)}^Y(z)$ by graphs in which nodes depict sheets and lines between nodes depict branch cuts that connect the sheets.  Of the latter, double lines represent finite cuts and single lines represent cuts that extend to infinity.
\\

\subsection{$(q,k)=(1,2)$}
\label{ex-dimers}
From equations \eref{case1a} and \eref{case1b}, we compute the analytic structure and asymptotic behaviour of $G_{(p)}^Y(z)$,
\\

\begin{pspicture}(8,4)
\psset{unit=.8cm}

\psline(1,1)(1,2)
\psline(1,2)(1,3)
\psline[doubleline=true](1,3)(1,4)

\psdot[dotsize=.15](1,4)
\psdot[dotsize=.15](1,3)
\psdot[dotsize=.15](1,2)
\psdot[dotsize=.15](1,1)

\rput[Br](.5,4){$G_Y^{(1)}(z)_-$}
\rput[Br](.5,3){$G_Y^{(1)}(z)_+$}
\rput[Bl](1.5,4){$z-z^{-1}+\mathcal{O}(z^{-2})$}
\rput[Bl](1.5,3){$\omega^2t_4^{-1/3}z^{1/3}-\frac{t_3}{3t_4}+\omega\frac{t_3^2-3t_2t_4}{9t_4^{5/3}}z^{-1/3}-\omega^2\frac{2t_3^3-9t_2t_3t_4}{81t_4^{7/3}}z^{-2/3}+\frac{1}{3}z^{-1}+\mathcal{O}(z^{-4/3})$}
\rput[Bl](1.5,2){$\omega t_4^{-1/3}z^{1/3}-\frac{t_3}{3t_4}+\omega^2\frac{t_3^2-3t_2t_4}{9t_4^{5/3}}z^{-1/3}-\omega\frac{2t_3^3-9t_2t_3t_4}{81t_4^{7/3}}z^{-2/3}+\frac{1}{3}z^{-1}+\mathcal{O}(z^{-4/3})$}
\rput[Bl](1.5,1){$t_4^{-1/3}z^{1/3}-\frac{t_3}{3t_4}+\frac{t_3^2-3t_2t_4}{9t_4^{5/3}}z^{-1/3}-\frac{2t_3^3-9t_2t_3t_4}{81t_4^{7/3}}z^{-2/3}+\frac{1}{3}z^{-1}+\mathcal{O}(z^{-4/3})$}
\end{pspicture}
\\

\noindent where $\omega=\E^{\I\pi/3}$. From \eref{polynomial0}, we may compute the asymptotic behaviour of $G_{(p)}^Y(z)$ on all sheets: 
\\

\begin{pspicture}(8,4)
\psset{unit=.8cm}

\rput[Br](0.5,3){$G^Y_{(0)}(z)_-$}
\rput[Br](0.5,4){$G^Y_{(0)}(z)_+$}
\rput[Bl](1.5,1){$z+t_4^{-1/3}z^{1/3}-\frac{t_3}{3t_4}+\frac{t_3^2-3(t_2-1)t_4}{9t_4^{5/3}}z^{-1/3}-\frac{2t_3^3-9(t_2-1)t_3t_4}{81t_4^{7/3}}z^{-2/3}+\frac{1}{3}z^{-1}+\mathcal{O}(z^{-4/3})$}
\rput[Bl](1.5,2){$z+\omega t_4^{-1/3}z^{1/3}-\frac{t_3}{3t_4}+\omega^2\frac{t_3^2-3(t_2-1)t_4}{9t_4^{5/3}}z^{-1/3}-\omega \frac{2t_3^3-9(t_2-1)t_3t_4}{81t_4^{7/3}}z^{-2/3}+\frac{1}{3}z^{-1}+\mathcal{O}(z^{-4/3})$}
\rput[Bl](1.5,3){$z+\omega^2 t_4^{-1/3}z^{1/3}-\frac{t_3}{3t_4}+\omega\frac{t_3^2-3(t_2-1)t_4}{9t_4^{5/3}}z^{-1/3}-\omega^2\frac{2t_3^3-9(t_2-1)t_3t_4}{81t_4^{7/3}}z^{-2/3}+\frac{1}{3}z^{-1}+\mathcal{O}(z^{-4/3})$}
\rput[Bl](1.5,4){$z^{-1}+\mathcal{O}(z^{-2})$}

\psline(1,1)(1,2)
\psline(1,2)(1,3)
\psline[doubleline=true](1,3)(1,4)
\psdot[dotsize=.15](1,1)
\psdot[dotsize=.15](1,2)
\psdot[dotsize=.15](1,3)
\psdot[dotsize=.15](1,4)

\end{pspicture}
\\

\begin{pspicture}(8,2)
\psset{unit=.8cm}

\psline[doubleline=true](1,1)(1,2)

\psdot[dotsize=.15](1,1)
\psdot[dotsize=.15](1,2)

\rput[Br](0.5,2){$G^Y_{(1)}(z)_+$}
\rput[Br](0.5,1){$G^Y_{(1)}(z)_-$}
\rput[Bl](1.5,1){$z+z^{-1}+\mathcal{O}(z^{-2})$}
\rput[Bl](1.5,2){$t_4z^3+t_3z^2+t_2z-z^{-1}+\mathcal{O}(z^{-2})$}

\end{pspicture}

\subsection{$(q,k)=(2,1)$}
\label{ex-ising}
 From \eref{case2a} and \eref{case2b}, we compute the analytic structure and asymptotic behaviour of $G_{(p)}^Y(z)$,
\\

\begin{pspicture}(8,4)
\psset{unit=.8cm}

\psline(1,1)(1,2)
\psline[doubleline=true](1,2)(1,3)
\psline(1,3)(1,4)

\psdot[dotsize=.15](1,1)
\psdot[dotsize=.15](1,2)
\psdot[dotsize=.15](1,3)
\psdot[dotsize=.15](1,4)

\rput[Br](0.5,3){$G_Y^{(1)}(z)_-$}
\rput[Br](0.5,2){$G_Y^{(1)}(z)_+$}
\rput[Bl](1.5,4){$z+t_3^{-1/2}z^{1/2}+\frac{t_2}{2t_3}+\frac{t_2^2}{8t_3^{3/2}}z^{-1/2}-\frac{1}{2}z^{-1}+\mathcal{O}(z^{-3/2})$}
\rput[Bl](1.5,3){$z-t_3^{-1/2}z^{1/2}+\frac{t_2}{2t_3}-\frac{t_2^2}{8t_3^{3/2}}z^{-1/2}-\frac{1}{2}z^{-1}+\mathcal{O}(z^{-3/2})$}
\rput[Bl](1.5,2){$t_3^{-1/2}z^{1/2}-\frac{t_2}{2t_3}+\frac{t_2^2}{8t_3^{3/2}}z^{-1/2}+\frac{1}{2}z^{-1}+\mathcal{O}(z^{-3/2})$}
\rput[Bl](1.5,1){$-t_3^{-1/2}z^{1/2}-\frac{t_2}{2t_3}-\frac{t_2^2}{8t_3^{3/2}}z^{-1/2}+\frac{1}{2}z^{-1}+\mathcal{O}(z^{-3/2})$}
\end{pspicture}
\\

\begin{pspicture}(8,4)
\psset{unit=.8cm}

\psline[doubleline=true](1,1)(1,2)
\psline(1,2)(1,3)
\psline[doubleline=true](1,3)(1,4)

\psdot[dotsize=.15](1,1)
\psdot[dotsize=.15](1,2)
\psdot[dotsize=.15](1,3)
\psdot[dotsize=.15](1,4)

\rput[Br](0.5,4){$G_Y^{(2)}(z)_-$}
\rput[Br](0.5,3){$G_Y^{(2)}(z)_+$}
\rput[Bl](1.5,4){$z-z^{-1}+\mathcal{O}(z^{-2})$}
\rput[Bl](1.5,3){$2t_3^{-1/2}z^{1/2}-\frac{t_2}{t_3}+\frac{t_2^2}{4t_3^{3/2}}z^{-1/2}+\mathcal{O}(z^{-3/2})$}
\rput[Bl](1.5,2){$-2t_3^{-1/2}z^{1/2}-\frac{t_2}{t_3}-\frac{t_2^2}{4t_3^{3/2}}z^{-1/2}+\mathcal{O}(z^{-3/2})$}
\rput[Bl](1.5,1){$-z-\frac{2t_2}{t_3}+z^{-1}+\mathcal{O}(z^{-2})$}

\end{pspicture}
\\

\noindent From the resulting polynomials $F_{(p)}(x,y)$, we may also compute the asymptotic behaviour of $G_{(p)}^Y(z)$ on all sheets. For example, from \eref{polynomial1},
\\

\begin{pspicture}(8,3)
\psset{unit=.8cm}

\psline(1,1)(1,2)
\psline[doubleline=true](1,2)(1,3)
\psdot[dotsize=.15](1,1)
\psdot[dotsize=.15](1,2)
\psdot[dotsize=.15](1,3)

\rput[Br](0.5,3){$G^Y_{(1)}(z)_-$}
\rput[Br](0.5,2){$G^Y_{(1)}(z)_+$}
\rput[Bl](1.5,3){$t_3z^2+t_2z-z^{-1}+\mathcal{O}(z^{-2})$}
\rput[Bl](1.5,2){$z+t_3^{-1/2}z^{1/2}+\frac{t_2-1}{t_3}+\frac{(t_2-1)^2}{8t_3^{3/2}}z^{-1/2}+\frac{1}{2}z^{-1}+\mathcal{O}(z^{-3/2})$}
\rput[Bl](1.5,1){$z-t_3^{-1/2}z^{1/2}+\frac{t_2-1}{t_3}-\frac{(t_2-1)^2}{8t_3^{3/2}}z^{-1/2}+\frac{1}{2}z^{-1}-\mathcal{O}(z^{-3/2})$}

\end{pspicture}
\\

\begin{pspicture}(8,3)
\psset{unit=.8cm}

\psline[doubleline=true](1,1)(1,2)
\psline[doubleline=true](1,2)(1,3)
\psdot[dotsize=.15](1,1)
\psdot[dotsize=.15](1,2)
\psdot[dotsize=.15](1,3)

\rput[Br](0.5,2){$G^Y_{(2)}(z)_-$}
\rput[Br](0.5,3){$G^Y_{(2)}(z)_+$}
\rput[Bl](1.5,1){$-z-\frac{2t_2}{t_3}-z^{-1}+\mathcal{O}(z^{-2})$}
\rput[Bl](1.5,2){$\frac{t_3}{4}z^2+\frac{t_2}{2}z+\mathcal{O}(z^{-2})$}
\rput[Bl](1.5,3){$z+z^{-1}+\mathcal{O}(z^{-2})$}

\end{pspicture}

\subsection{$(q,k)=(3,1)$}
\label{ex-potts3} 
 From \eref{case3a} and \eref{case3b}, we compute the analytic structure and asymptotic behaviour of $G_{(p)}^Y(z)$,
\\

\begin{pspicture}(8,6)
\psset{unit=.8cm}

\psline(1,1)(1,2)
\psline[doubleline=true](1,2)(1,3)
\psline(1,3)(1,4)
\psline[doubleline=true](1,4)(1,5)
\psline(1,5)(1,6)

\psdot[dotsize=.15](1,1)
\psdot[dotsize=.15](1,2)
\psdot[dotsize=.15](1,3)
\psdot[dotsize=.15](1,4)
\psdot[dotsize=.15](1,5)
\psdot[dotsize=.15](1,6)

\rput[Br](0.5,3){$G_Y^{(1)}(z)_-$}
\rput[Br](0.5,2){$G_Y^{(1)}(z)_+$}
\rput[Bl](1.5,1){$-t_3^{-1/2}z^{1/2}-\frac{t_2}{2t_3}-\frac{t_2^2}{8t_3^{3/2}}z^{-1/2}+\frac{1}{2}z^{-1}-\mathcal{O}(z^{-3/2})$}
\rput[Bl](1.5,2){$t_3^{-1/2}z^{1/2}-\frac{t_2}{2t_3}+\frac{t_2^2}{8t_3^{3/2}}z^{-1/2}+\frac{1}{2}z^{-1}+\mathcal{O}(z^{-3/2})$}
\rput[Bl](1.5,3){$ z-2t_3^{-1/2}z^{1/2}+\frac{t_2}{t_3}-\frac{t_2^2}{4t_3^{3/2}}z^{-1/2}-\mathcal{O}(z^{-3/2})$}
\rput[Bl](1.5,4){$ z+2t_3^{-1/2}z^{1/2}+\frac{t_2}{t_3}+\frac{t_2^2}{4t_3^{3/2}}z^{-1/2}+\mathcal{O}(z^{-3/2})$}
\rput[Bl](1.5,5){$2z-t_3^{-1/2}z^{1/2}+\frac{5t_2}{2t_3}-\frac{t_2^2}{8t_3^{3/2}}z^{-1/2}-\frac{1}{2}z^{-1}-\mathcal{O}(z^{-3/2})$}
\rput[Bl](1.5,6){$2z+t_3^{-1/2}z^{1/2}+\frac{5t_2}{2t_3}+\frac{t_2^2}{8t_3^{3/2}}z^{-1/2}-\frac{1}{2}z^{-1}+\mathcal{O}(z^{-3/2})$}

\end{pspicture}
\\

\begin{pspicture}(8,6)
\psset{unit=.8cm}

\psline[doubleline=true](1,1)(1,2)
\psline(1,2)(1,3)
\psline[doubleline=true](1,3)(1,4)
\psline(1,4)(1,5)
\psline[doubleline=true](1,5)(1,6)

\psdot[dotsize=.15](1,1)
\psdot[dotsize=.15](1,2)
\psdot[dotsize=.15](1,3)
\psdot[dotsize=.15](1,4)
\psdot[dotsize=.15](1,5)
\psdot[dotsize=.15](1,6)

\rput[Br](0.5,6){$G_Y^{(3)}(z)_-$}
\rput[Br](0.5,5){$G_Y^{(3)}(z)_+$}
\rput[Bl](1.5,6){$z-z^{-1}+\mathcal{O}(z^{-2})$}
\rput[Bl](1.5,5){$3t_3^{-1/2}z^{1/2}-\frac{3t_2}{2t_3}+\frac{3t_2^2}{8t_3^{3/2}}z^{-1/2}-\frac{1}{2}z^{-1}+\mathcal{O}(z^{-3/2})$}
\rput[Bl](1.5,4){$-3t_3^{-1/2}z^{1/2}-\frac{3t_2}{2t_3}-\frac{3t_2^2}{8t_3^{3/2}}z^{-1/2}-\frac{1}{2}z^{-1}-\mathcal{O}(z^{-3/2})$}
\rput[Bl](1.5,3){$-2z+3 t_3^{-1/2}z^{1/2}-\frac{9t_2}{2t_3}+\frac{3t_2^2}{8t_3^{3/2}}z^{-1/2}+\frac{1}{2}z^{-1}+\mathcal{O}(z^{-3/2})$}
\rput[Bl](1.5,2){$-2z-3 t_3^{-1/2}z^{1/2}-\frac{9t_2}{2t_3}-\frac{3t_2^2}{8t_3^{3/2}}z^{-1/2}+\frac{1}{2}z^{-1}+\mathcal{O}(z^{-3/2})$}
\rput[Bl](1.5,1){$-3z-6\frac{t_2}{t_3}+z^{-1}+\mathcal{O}(z^{-2})$}

\end{pspicture}
\\

\noindent From the resulting polynomials $F_{(p)}(x,y)$, we may also compute the asymptotic behaviour of $G_{(p)}^Y(z)$ on all sheets. For example, from  \eref{polynomial2},
\\

\begin{pspicture}(8,5)
\psset{unit=.8cm}

\psline(1,1)(1,2)
\psline[doubleline=true](1,2)(1,3)
\psline[doubleline=true](1,3)(1,4)
\psline(1,4)(1,5)

\psdot[dotsize=.15](1,1)
\psdot[dotsize=.15](1,2)
\psdot[dotsize=.15](1,3)
\psdot[dotsize=.15](1,4)
\psdot[dotsize=.15](1,5)

\rput[Br](0.5,3){$G^Y_{(2)}(z)_-$}
\rput[Br](0.5,4){$G^Y_{(2)}(z)_+$}
\rput[Bl](1.5,5){$z-t_3^{-1/2}z^{1/2}-\frac{t_2-1}{2t_3}-\frac{(t_2-1)^2}{8t_3^{3/2}}z^{1/2}+\frac{1}{2}z^{-1}-\mathcal{O}(z^{-3/2})$}
\rput[Bl](1.5,4){$z+t_3^{-1/2}z^{1/2}-\frac{t_2-1}{2t_3}+\frac{(t_2-1)^2}{8t_3^{3/2}}z^{1/2}+\frac{1}{2}z^{-1}+\mathcal{O}(z^{-3/2})$}
\rput[Bl](1.5,3){$\frac{t_3}{4}z^2+\frac{t_2}{2}z+\mathcal{O}(z^{-2})$}
\rput[Bl](1.5,2){$-z+\I t_3^{-1/2}z^{1/2}-\frac{t_2-1}{2t_3}+\I\frac{(t_2-1)^2}{8t_3^{3/2}}z^{1/2}-\frac{1}{2}z^{-1}+\mathcal{O}(z^{-3/2})$}
\rput[Bl](1.5,1){$-z-\I t_3^{-1/2}z^{1/2}-\frac{t_2-1}{2t_3}-\I\frac{(t_2-1)^2}{8t_3^{3/2}}z^{1/2}-\frac{1}{2}z^{-1}\mathcal{O}(z^{-3/2})$}

\end{pspicture}

\section*{References}

\bibliographystyle{iopart-num}
\bibliography{ref}

\providecommand{\newblock}{}
\begin{thebibliography}{10}
\expandafter\ifx\csname url\endcsname\relax
  \def\url#1{{\tt #1}}\fi
\expandafter\ifx\csname urlprefix\endcsname\relax\def\urlprefix{URL }\fi
\providecommand{\eprint}[2][]{\href{http://arxiv.org/abs/#2}{\ttfamily #2}}

\bibitem{Tutte63}
Tutte W 1963 {\em Canad. J. Math.\/} {\bf 15} 249

\bibitem{David85}
David F 1985 {\em Nucl. Phys.\/} {\bf B257} 45

\bibitem{Ambjorn85a}
Ambjorn J, Durhuus B and Frohlich J 1985 {\em Nucl. Phys.\/} {\bf B257} 433

\bibitem{Ambjorn85b}
Ambjorn J, Durhuus B, Frohlich J and Orland P 1986 {\em Nucl. Phys.\/} {\bf
  B270} 457

\bibitem{Kazakov85}
Kazakov V~A, Migdal A~A and Kostov I~K 1985 {\em Phys. Lett.\/} {\bf B157} 295

\bibitem{DiFrancesco93}
Di~Francesco P, Ginsparg P~H and Zinn-Justin J 1995 {\em Phys. Rept.\/} {\bf
  254} 1 (\textit{Preprint} \eprint{hep-th/9306153})

\bibitem{Ginsparg93}
Ginsparg P~H and Moore G~W 1993 {Lectures on 2-D gravity and 2-D string theory}
  {\em Boulder 1992, Proceedings, Recent directions in particle theory\/}
  (\textit{Preprint} \eprint{hep-th/9304011})

\bibitem{Carroll96}
Carroll S~M, Ortiz M~E and Taylor W 1996 {\em Phys. Rev. Lett.\/} {\bf 77} 3947
  (\textit{Preprint} \eprint{hep-th/9605169})

\bibitem{Carroll97}
Carroll S~M, Ortiz M~E and Taylor W 1998 {\em Phys. Rev.\/} {\bf D58} 046006
  (\textit{Preprint} \eprint{hep-th/9711008})

\bibitem{IR10}
Ishiki G and Rim C 2010 {\em Phys. Lett.\/} {\bf B694} 272 (\textit{Preprint}
  \eprint{1006.3906})

\bibitem{BIR10}
Bourgine J~E, Ishiki G and Rim C 2010 {\em JHEP\/} {\bf 1012} 046
  (\textit{Preprint} \eprint{1010.1363})

\bibitem{Chan10}
Chan C~T, Irie H and Yeh C~H 2012 {\em Nucl. Phys.\/} {\bf B854} 67
  (\textit{Preprint} \eprint{1011.5745})

\bibitem{Atkin11}
Atkin M~R and Wheater J~F 2011 {\em JHEP\/} {\bf 1102} 084 (\textit{Preprint}
  \eprint{1011.5989})

\bibitem{Atkin12}
Atkin M~R and Zohren S 2012 {\em JHEP\/} {\bf 1211} 163 (\textit{Preprint}
  \eprint{1204.4482})

\bibitem{Potts52}
Potts R~B 1952 {\em Math. Proc. Cambridge Philos. Soc.\/} {\bf 48}

\bibitem{Behrend00}
Behrend R~E and Pearce P~A 2001 {\em J. Statist. Phys.\/} {\bf 102} 577
  (\textit{Preprint} \eprint{hep-th/0006094})

\bibitem{Kazakov88}
Kazakov V 1988 {\em Nucl. Phys. Proc. Suppl.\/} {\bf B4} 93

\bibitem{Daul94}
Daul J~M 1994  (\textit{Preprint} \eprint{hep-th/9502014})

\bibitem{ZJ99}
Zinn-Justin P 2001 {\em J. Statist. Phys.\/} {\bf 98} 245 (\textit{Preprint}
  \eprint{cond-mat/9903385})

\bibitem{Bonnet99}
Bonnet G 1999 {\em Phys. Lett.\/} {\bf B459} 575 (\textit{Preprint}
  \eprint{hep-th/9904058})

\bibitem{Eynard99}
Eynard B and Bonnet G 1999 {\em Phys. Lett.\/} {\bf B463} 273
  (\textit{Preprint} \eprint{hep-th/9906130})

\bibitem{Guionnet10}
Guionnet A, Jones V, Shlyakhtenko D and Zinn-Justin P 2012 {\em Commun. Math.
  Phys.\/} {\bf 316} 45 (\textit{Preprint} \eprint{1012.0619})

\bibitem{Bernardi11}
Bernardi O and Bousquet-M\'elou M 2011 {\em J. Combin. Theory Ser.\/} {\bf
  B101} 315

\bibitem{Borot12}
{Borot} G, {Bouttier} J and {Guitter} E 2012 {\em J. Phys. A: Math. Theor.\/}
  {\bf 45} 494017 (\textit{Preprint} \eprint{1207.4878})

\bibitem{Speicher93}
Speicher R 1993 {\em Publ. RIMS\/} {\bf 29} 731

\bibitem{Speicher94}
Speicher R 1994 {\em Math. Ann.\/} {\bf 300} 97

\bibitem{Voiculescu92}
Voiculescu D, Dykema K and Nica A {\em Free Random Variables\/} CRM Monograph
  Series (American Mathematical Soc.) ISBN 9780821869703

\bibitem{ZJ99-2}
{Zinn-Justin} P 1999 {\em Phys. Rev. E\/} {\bf 59} 4884 (\textit{Preprint}
  \eprint{math-ph/9810010})

\bibitem{ZJ98}
Zinn-Justin P 1998 {\em Commun. Math. Phys.\/} {\bf 194} 631

\bibitem{Eynard03}
Eynard B 2003 {\em JHEP\/} {\bf 0311} 018 (\textit{Preprint}
  \eprint{hep-th/0309036})

\bibitem{Pasquier86}
Pasquier V 1987 {\em Nucl. Phys.\/} {\bf B285} 162

\bibitem{Kostov95}
Kostov I~K 1996 {\em Nucl. Phys. Proc. Suppl.\/} {\bf 45A} 13
  (\textit{Preprint} \eprint{hep-th/9509124})

\bibitem{Matytsin93}
Matytsin A 1994 {\em Nucl. Phys.\/} {\bf B411} 805 (\textit{Preprint}
  \eprint{hep-th/9306077})

\bibitem{Harishchandra56}
Harish-Chandra 1956 {\em Nat. Acad. Sci.\/} {\bf 42} 538

\bibitem{Itzykson80}
Itzykson C and Zuber J 1980 {\em J. Math. Phys.\/} {\bf 21} 411

\bibitem{Gradshteyn07}
Gradshteyn I~S and Ryzhik I~M 2007 {\em Table of integrals, series, and
  products\/} seventh ed (Elsevier/Academic Press, Amsterdam)

\bibitem{Gross91}
Gross D~J and Newman M~J 1991 {\em Phys. Lett.\/} {\bf B266} 291

\bibitem{Eynard95}
Eynard B and Kristjansen C 1996 {\em Nucl. Phys.\/} {\bf B466} 463
  (\textit{Preprint} \eprint{hep-th/9512052})

\bibitem{Zee96}
Zee A 1996 {\em Nucl. Phys.\/} {\bf B474} 726 (\textit{Preprint}
  \eprint{cond-mat/9602146})

\bibitem{Staudacher89}
Staudacher M 1990 {\em Nucl. Phys.\/} {\bf B336} 349

\bibitem{Boulatov86}
Boulatov D and Kazakov V 1987 {\em Phys. Lett.\/} {\bf B186} 379

\bibitem{Kazakov86}
Kazakov V 1986 {\em Phys. Lett.\/} {\bf A119} 140

\bibitem{Cardy89}
Cardy J~L 1989 {\em Nucl. Phys.\/} {\bf B324} 581

\bibitem{Eynard92}
Eynard B and Zinn-Justin J 1992 {\em Nucl. Phys.\/} {\bf B386} 558
  (\textit{Preprint} \eprint{hep-th/9204082})

\bibitem{Eynard02}
Eynard B 2003 {\em JHEP\/} {\bf 0301} 051 (\textit{Preprint}
  \eprint{hep-th/0210047})

\bibitem{ksm13}
Atkin M, Niedner B and Wheater J 2015 {\em Springer Proc. Phys.\/} {\bf 170}
  387 (\textit{Preprint} \eprint{1511.01525})

\bibitem{Baxter71}
Baxter R 1971 {\em Stud. Appl. Math.\/} {\bf 50} 51

\bibitem{Baxter73}
Baxter R~J 1973 {\em J. Phys. C\/} {\bf 6} L445

\bibitem{Durhuus96}
Durhuus B and Kristjansen C 1997 {\em Nucl. Phys.\/} {\bf B483} 535
  (\textit{Preprint} \eprint{hep-th/9609008})

\bibitem{Cappelli86}
Cappelli A, Itzykson C and Zuber J~B 1987 {\em Nucl. Phys.\/} {\bf B280} 445

\bibitem{Caldeira03}
Caldeira A~F, Kawai S and Wheater J~F 2003 {\em JHEP\/} {\bf 0308} 041
  (\textit{Preprint} \eprint{hep-th/0306082})

\bibitem{Affleck98}
{Affleck} I, {Oshikawa} M and {Saleur} H 1998 {\em J. Phys. A: Math. Gen.\/}
  {\bf 31} 5827 (\textit{Preprint} \eprint{cond-mat/9804117})

\bibitem{Kramers41}
Kramers H and Wannier G 1941 {\em Phys. Rev.\/} {\bf 60} 252

\bibitem{BEH01a}
Bertola M, Eynard B and Harnad J~P 2002 {\em Commun. Math. Phys.\/} {\bf 229}
  73 (\textit{Preprint} \eprint{nlin/0108049})

\bibitem{BEH01b}
Bertola M, Eynard B and Harnad J 2003 {\em Theor. Math. Phys.\/} {\bf 134} 27
  (\textit{Preprint} \eprint{nlin/0112006})

\bibitem{eynard11b}
{Borot} G and {Eynard} B 2011 {\em J. Stat. Mech. Theor. Exp.\/} {\bf 1} 10
  (\textit{Preprint} \eprint{0910.5896})

\bibitem{Eynard07}
Eynard B and Orantin N 2007 {\em Commun. Number Theory Phys.\/} {\bf 1} 347

\bibitem{Zamolodchikov86}
Zamolodchikov A 1986 {\em JETP Lett.\/} {\bf 43} 730

\bibitem{Affleck91}
Affleck I and Ludwig A~W 1991 {\em Phys. Rev. Lett.\/} {\bf 67} 161

\bibitem{Friedan03}
Friedan D and Konechny A 2004 {\em Phys. Rev. Lett.\/} {\bf 93} 030402
  (\textit{Preprint} \eprint{hep-th/0312197})

\bibitem{Wilshin08}
Kawamoto S, Wheater J~F and Wilshin S 2008 {\em Int. J. Mod. Phys.\/} {\bf A23}
  2257

\end{thebibliography}
\end{document}